\theoremstyle{theorem}
\newtheorem{thm}{\protect\theoremname}[section]
\newtheorem*{thm*}{\protect\theoremname}
\newtheorem{lem}{\protect\lemname}[section]
\newtheorem{prop}[lem]{\protect\propname}
\newtheorem{ass}{\protect\assname}[section]
\theoremstyle{definition}
\theoremstyle{remark}
\newtheorem*{rem*}{\protect\remname}
\newtheorem*{rems*}{\protect\remsname}
\providecommand{\theoremname}{Theorem}
\providecommand{\lemname}{Lemma}
\providecommand{\corname}{Corollary}
\providecommand{\propname}{Proposition}
\providecommand{\conjname}{Conjecture}
\providecommand{\defnname}{Definition}
\providecommand{\assname}{Assumption}
\providecommand{\remname}{Remark}
\providecommand{\remsname}{Remarks}
\newcommand{\ipc}[2]{\left \langle #1 , \ #2 \right \rangle }
\newcommand{\Ev}[1]{\E \left( #1 \right)}  
\newcommand{\norm}[1]{\left\Vert#1\right\Vert}
\newcommand{\abs}[1]{\left\vert#1\right\vert}
\newcommand{\setb}[2]{\left \{ #1 \ \middle | \ #2 \right \} }
\newcommand{\com}[2]{\left[ #1 , #2 \right ]}
\newcommand{\bra}[1]{\left < #1 \right |}
\newcommand{\ket}[1]{\left | #1 \right >}
\newcommand{\dirac}[3]{\bra{#1} #2 \ket{#3}} 
\newcommand{\diracip}[2]{\left <#1 \middle | #2 \right >}
\newcommand{\bb}[1]{\mathbb{#1}}
\newcommand{\mc}[1]{\mathcal{#1}}
\newcommand{\wt}[1]{\widetilde{#1}}
\newcommand{\wh}[1]{\widehat{#1}}
\renewcommand{\vec}[1]{\mathbf{#1}}
\def\Z{\mathbb Z}
\def\R{\mathbb R}
\def\E{\mathbb E}
\def\e{\mathrm e}
\def\im{\mathrm i}
\def\di{\mathrm d}
\def\1{{\mathsf 1}}
\def\tem{\textemdash}
\def\ran{\operatorname{ran}} 
\def\Re{\operatorname{Re}} 
\numberwithin{equation}{section}
\numberwithin{figure}{section}
\begin{document}
\title{Quantum Brownian motion induced by thermal noise in the presence of disorder} 
\author{J\"urg Fr\"ohlich}
\affiliation{Institute of Theoretical Physics, ETH Z\"urich, CH-8093 Zurich, Switzerland }
\email{juerg@phys.ethz.ch}
\author{Jeffrey Schenker}
\affiliation{Michigan State University Department of Mathematics, Wells Hall, 619 Red Cedar Road, East Lansing, MI 48824, USA }
\email{jeffrey@math.msu.edu}

\begin{abstract}
The motion of a quantum particle hopping on a simple cubic lattice under the influence of thermal noise and of a static random potential is expected to be diffusive, i.e., the particle is expected to exhibit \textit{`quantum Brownian motion'}, no matter how weak the thermal noise is.
This is shown to be true in a model where the dynamics of the particle is governed by a Lindblad equation for a one-particle density matrix. The generator appearing in this equation is the sum of two terms: a Liouvillian corresponding to a random 
Schr\"{o}dinger operator and a Lindbladian describing the effect of thermal noise in the kinetic limit.
Under suitable but rather general assumptions on the Lindbladian, the diffusion constant characterizing the asymptotics of the motion of the particle is proven to be strictly positive and finite. If the disorder in the random potential is so large that transport is completely suppressed in the limit where the thermal noise is turned off, then the diffusion constant tends  to zero proportional to the coupling of the particle to the heat bath.	
\end{abstract}

\maketitle

\section{Introduction}
In this paper, we study the propagation of a quantum particle through the lattice $\mathbb{Z}^d$ in the presence of a disordered potential landscape and under the influence of thermal noise. The dynamics of the particle is described by a Lindblad equation for the time evolution of its state, a one-particle density matrix. The Hamiltonian part in the total Lindblad generator is given by a random Schr\"odinger operator (i.e.,  an Anderson Hamiltonian), the dissipative part, which describes the dynamical effects of the thermal noise, by a Lindblad operator that couples the motion of the particle to the degrees of freedom --- afterwards ``traced out'' --- of a  heat bath in thermal equilibrium at a positive temperature $\beta^{-1}$. If the coupling of the particle to the heat bath is turned off, its dynamics is generated by a standard random 
Schr\"odinger operator. It is well known that strong disorder in the potential landscape may completely suppress coherent transport of the quantum particle --- the particle gets stuck near a well of the random potential. This is the phenomenon of \textit{Anderson localization}. It is natural to ask whether localization survives when the particle is coupled to the heat bath. The \textit{main result} established in this paper (see Theorem \ref{thm:main}, below) says that an arbitrarily small amount of thermal noise suffices to destroy Anderson localization. The particle then exhibits a diffusive motion (``quantum Brownian motion''). We will further show that, under the assumption of complete localization in the absence of thermal noise, the diffusion constant tends to zero, as the coupling to the heat bath is turned off. In contrast, in the absence of disorder, the diffusion constant diverges, as the coupling between the particle and the heat bath is turned off, because, in this limit, transport of the particle is governed by ballistic motion.

This paper is an adaptation to the present context of work of the second author on dynamical randomness\cite{Schenker2014} and makes use of similar mathematical techniques.

To be concrete, we consider a quantum particle hopping on the lattice $\Z^d$ whose pure states are given by wave functions, $\psi_t \in \ell^{2}(\mathbb{Z}^d)$, evolving according to the Schr\"odinger equation
$$\partial_t \ket{\psi_t} \ = \ -\im H_\omega \ket{\psi_t}$$
where $t$ denotes time.
In this equation, $H_\omega$ is an Anderson (random Schr\"odinger) Hamiltonian of the form
\begin{equation}\label{eq:andersonHam} H_\omega \ = \  u \sum_{|x-y|=1} \ket{x}\bra{y} +  \lambda \sum_{x} \omega(x) \ket{x}\bra{x}
\end{equation}
where $u$ is the hopping amplitude, $\omega$ is a random potential, with $\{ \omega(x) \}_{x\in \mathbb{Z}^d}$ independent identically distributed (i.i.d.) random variables, and $\lambda$ is a constant characterizing the strength of the disorder. It is often assumed that the distribution of the variables $\omega(x)$ has a bounded density. In this paper, we only need to assume that if the disorder is large all states are localized in the sense that the spectrum of $H_{\omega}$ is dense pure-point, and the corresponding eigenfunctions are localized. This is known if the distribution of the variables 
$\omega(x)$ is bounded. Assuming that the ratio $\nicefrac{u}{\lambda}$ is sufficiently small, there then exists a constant $\mu>0$ such that if $\psi_{t=0}$ has finite support, e.g., 
$\langle x\vert\psi_{t=0}\rangle =\delta_{x0}$, then
\begin{equation}\label{eq:AL}\Ev{\abs{\diracip{x}{\psi_t}}^2} \ \le \ C \e^{-\mu|x|}\end{equation}
where $\Ev{\cdot}$ denotes averaging with respect to the random potential $\omega$, and $C$ is a finite constant (depending on the choice of $\psi_{t=0}$); see \citet{Frohlich1983} and  \citet{Aizenman1993}. See also Ref.\ \onlinecite{Schenker2014a} for a recent survey with an estimate of the critical value of $\nicefrac{u}{\lambda}$. 

In concrete situations of physics, a quantum particle will usually interact, at least weakly, with degrees of freedom describing a surrounding medium, e.g., with the phonons corresponding to quantized vibrations of a crystal lattice. If these degrees of freedom are excited it is conceivable that, even at $large$ disorder, the particle exhibits \emph{dissipative} transport \tem \ diffusive motion, or motion with friction \tem \ due to its interactions with the medium. This is the phenomenon studied in this paper.

A physical theory of dissipative electron transport in conducting materials was developed a long time ago; see, e.g., \citet{Mott1968}.  More recently the mathematical formalism for the quantum Markov approximation in aperiodic and disordered media was studied by \citet{Spehner}, although quantum diffusion was not established in that work.  A simple one-particle model exhibiting quantum diffusion was studied in Ref.\ \onlinecite{DeRoeck2011}, without a quantum Markov approximation. A formalism for the study of dissipative transport in disordered semi-conductors,  within a quantum Markov approximation but incorporating effects of the Pauli principle on a gas of non-interacting electrons at positive density, was developed by \citet{Androulakis}. The purpose of the study undertaken in Ref.\ \onlinecite{Androulakis} was to gain some understanding of the role of thermal noise in the quantum Hall effect. In comparison, the goals of the present paper are more modest. Assuming that the density of particles (moving, e.g., in a conduction band of a semi-conductor) is so small that the one-particle approximation can be justified, we propose to analyze the interplay between randomness and thermal noise in the transport of a single quantum particle. We plan to analyze systems at positive density in future work.

Next, we introduce some notation and describe the model studied in this paper.
The state of an open system consisting of a single quantum particle that interacts with a dynamical environment (medium) is typically a mixed state, which is described by a one-particle density matrix, $\rho_t$ (with $t$ denoting time). This one-particle description is obtained by tracing out all degrees of freedom of the environment. 
In the kinetic (Markovian) limit, the evolution of $\rho_t$ is described by a a Lindblad equation of the form 
\begin{equation} 
\partial_t \rho_t  \ = \ -\im \com{H_\omega}{\rho_t} + g \mc{L}\rho_t,
\label{eq:Lindblad}
\end{equation}
where $H_\omega$ is a Hamiltonian, in the following the one introduced in eq.\ \eqref{eq:andersonHam}, g is a constant that measures the strength of the thermal noise, and the operator $\mc{L}$ is a \textit{Lindblad generator}, which we specify below. 

In our context, a one-particle density matrix, $\rho$, is a positive, trace-class operator on 
$\ell^2(\Z^d)$. 
In the $x$-space representation, the operator $\rho$ has matrix elements $\dirac{x}{\rho}{y}$, with
$(x,y) \in \mathbb{Z}^d \times \mathbb{Z}^d$. It is convenient to introduce the variables
\begin{equation}
X= x+y, \qquad \xi= x-y\ .
\label{newvar}
\end{equation}
Then $X\in \mathbb{Z}^d$ and $\xi \in \mathbb{Z}^d$, with 
$$ X \pm \xi \in 2\mathbb{Z}^d\ .$$
From now on, we will write $\rho(X,\xi)$ for the matrix element  $\dirac{x}{\rho}{y}$ of $\rho$, where it is understood that $(X,\xi)$ and $(x,y)$ are related by eq.\ \eqref{newvar}.  
Consistent with eq.\ \eqref{newvar} we adopt the convention that $\rho(X,\xi)=0$ if $X+\xi$ or $X-\xi$ do not belong to $2 \Z^d$. 

To define the action of the Lindblad operator $\mathcal{L}$ on the density matrix $\rho$ we introduce two operators, $G$ (for ``gain'') and $L$ (for ``loss''), defined as follows:
\begin{equation}\label{eq:gain}
(G\rho)(X,\xi) \ = \ \sum_{\eta\in \Z^d} r(\xi,\eta) \rho(X,\eta)
\end{equation}
and
\begin{equation}\label{eq:loss}
(L\rho)(X,\xi) \ = \ \sum_{\eta \in \Z^d} r(0,\eta-\xi) \rho(X,\eta)	
\end{equation}
where the ``gain kernel'' $r:\Z^d\times \Z^d \rightarrow \bb{C}$  satisfies certain properties specified below.  More generally, the operators $G$ and $L$ might also act on the variable $X\in\mathbb{Z}^d$ by convolution (which is imposed by the requirement of translation invariance), but this possibility will not be considered in the present paper. 

The Lindblad operator $\mathcal{L}$ is expressed in terms of $G$ and $L$ by
\begin{equation}\label{eq:Lspaceform}
(\mc{L}\rho)(X,\xi) \ = \ (G\rho)(X,\xi) - (L\rho)(X,\xi)
\end{equation}
so that the Lindblad equation \eqref{eq:Lindblad} for $\rho_t$, incorporating hopping, disorder and dissipation, takes the form  
\begin{equation}\label{eq:SErho}
\begin{aligned}
 \partial_t \rho_t(X,\xi) \ =  \ & -\im u \sum_{|e|=1} \left [ \rho_t(X+e,\xi+e) - \rho_t(X+e,\xi-e) \right ] 	 \\ 
& \ - \ \im \lambda \left [ \omega\left ( \frac{X+\xi}{2} \right ) - \omega \left ( \frac{X-\xi}{2} \right ) \right ] \rho(X,\xi)\\
& \ + \ g \sum_{\eta\in \mathbb{Z}^d} \left [ r(\xi,\eta) - r(0,\eta-\xi) \right ] \rho(X,\eta) \ .
\end{aligned}
\end{equation}  

The gain kernel $r$ is usually assumed to have certain physically natural properties, such as ``detailed balance'' for the energy.  These properties are discussed in \S\ref{sec:properties} below. Here we only note that, for our analysis to proceed, $r$ must satisfy the following
\begin{ass}\label{ass:r} The gain kernel $r:\Z^d\times \Z^d \rightarrow \bb{C}$ satisfies: 
\begin{enumerate}
\item $r(\xi,\eta)=0$ unless $\xi+\eta\in 2\Z^d$;
\item $r$ the Fourier transform of a non-negative measure $\mu$,
\begin{equation}\label{eq:r} r(\xi,\eta) \ = \ \int_{\bb{T}^d\times \bb{T}^d} \e^{\im \xi \cdot \vec{p} - \im \eta \cdot\vec{q}} \, \mu(\di \vec{p},\di \vec{q}),
\end{equation}
where $\mathbb{T}^d$ denotes the $d$-torus $\mathbb{T}^d=[0,2\pi)^d$;
\item $r(\xi,0)-r(0,-\xi)=0$ for each $\xi\in \Z^d$; and
\item for any $\phi \in \ell^2(\Z^d)$ we have that
\begin{equation}\label{eq:exp-bound} \mathrm{Re} \sum_{\xi\in \Z^d}\sum_{\eta\in \Z^d} \left [ r(0,\eta-\xi)- r(\xi,\eta) \right ] \overline{\phi(\xi)}\phi(\eta) \ \ge \ c \sum_{\xi\neq 0} \abs{\phi(\xi)}^2 
\end{equation}
\item $r$ is the kernel of a bounded operator on $\ell^2(\Z^d)$, i.e.,
$$ \sum_{\xi\in \Z^d} \abs{\sum_{\eta\in \Z^d}r(\xi,\eta) \phi(\eta)}^2 \ \le \ C \sum_{\xi\in \Z^d}\abs{\phi(\xi)}^2 $$
for some $C<\infty$ and any $\phi \in \ell^2(\Z^d)$;
\end{enumerate}
\end{ass}

\begin{rem*} 
1) Item 1 is required for the gain operator $(G\rho)(X,\xi)$ to satisfy the constraint of vanishing whenever 
$X\pm \xi \not \in 2\bb{Z}^d$.  
Items 1 through 5 are discussed in  \S\ref{sec:properties} below.\\
2) Since $r$ is the kernel of a bounded operator on $\ell^2$, it follows that $\mc{L}$ is a bounded map (from $\ell^2(\Z^d) \rightarrow \ell^2(\Z^d)$), on each fiber over $X\in \Z^d$.  Without loss, we scale $r$ and $g$ such that $\mc{L}$ has norm $\le 1$, i.e.,
\begin{equation}\label{eq:normalization} \sum_{\xi} \abs{\mc{L}\rho(X,\xi)}^2 \ \le \ \sum_{\xi} \abs{\rho(X,\xi)}^2.	
\end{equation}
\end{rem*}

For simplicity, we further assume that the Lindblad generator $\mathcal{L}$ respects certain lattice symmetries:
\begin{ass}\label{ass:symmetry}
\begin{enumerate}

\item $r(\xi,\eta)$ is invariant under inversion of coordinates.  That is, for each $i=1,\ldots,d$, 
\begin{equation}\label{eq:invert}
r(R_i \xi, R_i \eta) \ = \ r(\xi,\eta)	
\end{equation}
where 
$$R_i (x_1,\ldots,x_i,\ldots, x_d) \ = \ (x_1,\ldots,-x_i, \ldots, x_d)$$
\item $r(\xi,\eta)$ is invariant under permutation of coordinates.  That is, for any permutation $\sigma$ of $\{1,\ldots,d\}$,
\begin{equation}
r(T_\sigma \xi, T_\sigma \eta) \ = \ r(\xi,\eta)
\label{eq:permute}
\end{equation} 
where 
$$T_\sigma(x_1,\ldots,x_d) \ = \ (x_{\sigma(1)},\ldots,x_{\sigma(d)}). $$ 
\end{enumerate}
\end{ass}
From Assumption \ref{ass:symmetry} and eq.\ \eqref{eq:SErho} it follows  that the processes $(t,X,\xi) \mapsto \rho_t(T_\sigma X,T_\sigma \xi)$, $(t,X,\xi)\mapsto \rho_t(R_iX,R_i\xi)$ and $(t,X,\xi) \mapsto \rho_t(X,\xi)$ all have the same distribution.

The following theorems are the \textit{main results} proved in this paper.

\begin{thm}\label{thm:main}
Let $\rho_{t}$ be a solution of eq.\ \eqref{eq:SErho} with initial condition $\rho_{t=0} = \ket{0}\bra{0}$, and suppose that $r$ satisfies Assumptions \ref{ass:r} and \ref{ass:symmetry}. 
If the particle interacts with the heat bath, i.e., $g >0$, then the diffusion constant
\begin{equation}\label{eq:D} D_{i,j} \ := \ \lim_{t \rightarrow \infty } \frac{1}{t} \sum_{x\in \Z^d} x_i x_j \Ev{\dirac{x}{\rho_t}{x}} \ = \  \lim_{t \rightarrow \infty } \frac{1}{4t} \sum_{X\in 2 \Z^d } X_i X_j \Ev{ \rho_t(X,0)}	
\end{equation}
exists and satisfies
\begin{equation}\label{eq:D-identity}D_{i,j} \ = \ D  \delta_{i,j}	
\end{equation}
with $0 < D = d^{-1} \displaystyle{ \sum_{i=1}^d} D_{i,i} <\infty$.
\end{thm}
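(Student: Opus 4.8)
The plan is to pass to the disorder-averaged, translation-invariant description and to read off $D_{i,j}$ from the small-momentum, long-time asymptotics of the averaged density matrix. Since the potentials $\{\omega(x)\}$ are i.i.d., the full generator $\mc{G}_\omega$ of eq.\ \eqref{eq:SErho} is covariant under lattice shifts, so the averaged resolvent $\Ev{(z-\mc{G}_\omega)^{-1}}$ commutes with translations of the center-of-mass variable $X$; I would therefore Fourier transform in $X$ and work fiber-wise in the dual momentum $k\in\bb{T}^d$. Writing $\tilde\rho_t(k,\xi)=\sum_X \e^{\im k\cdot X}\Ev{\rho_t(X,\xi)}$, conservation of the trace gives $\tilde\rho_t(0,0)\equiv 1$, while the second moment in \eqref{eq:D} is $\sum_X X_iX_j\Ev{\rho_t(X,0)}=-\partial_{k_i}\partial_{k_j}\tilde\rho_t(k,0)\big|_{k=0}$. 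Thus $D_{i,j}$ is encoded in the behavior of the fiber dynamics as $k\to 0$, and I would extract the $t\to\infty$ limit through the Laplace transform $\int_0^\infty \e^{-zt}(\cdots)\,\di t$ together with a Tauberian theorem, reducing the claim to an analysis of the averaged fiber resolvent near $(z,k)=(0,0)$.

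The decisive structural input is the dissipative gap furnished by item 4 of Assumption \ref{ass:r}: the noise is coercive off the diagonal $\xi=0$, whereas the random potential enters \eqref{eq:SErho} only through the factor $\omega(\tfrac{X+\xi}{2})-\omega(\tfrac{X-\xi}{2})$, which vanishes at $\xi=0$. I would exploit this by a Feshbach--Schur reduction, splitting the fiber space into the diagonal sector $P$ (the populations $\rho(X,0)=\dirac{x}{\rho}{x}$) and its complement $Q$ (the coherences). The block $Q\mc{G}_\omega Q$ contains both the damping and the potential and, by the gap $c$ of \eqref{eq:exp-bound}, is boundedly invertible uniformly in the disorder for $z$ near $0$; the effective generator on the diagonal is then
\begin{equation*}
\mc{G}_{\mathrm{eff}}(z,k) \ = \ P\mc{G}_\omega P + P\mc{G}_\omega Q\,(z-Q\mc{G}_\omega Q)^{-1}\,Q\mc{G}_\omega P,
\end{equation*}
whose disorder average is again translation invariant and, thanks to the uniform invertibility of the $Q$-block, jointly analytic in $(z,k)$ near the origin. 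This is the step that simultaneously tames the static randomness — it is confined to the fast, gapped coherence sector — and yields a finite-range effective random walk for the populations.

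Existence of $D_{i,j}$ then follows from regular perturbation theory for the small pole $z=E(k)$ of the averaged fiber resolvent that emerges from the conserved zero mode at $(z,k)=(0,0)$; conservation of probability pins the right zero mode (the uniform population) and the left zero mode (the trace), so $E(0)=0$. I would show $\nabla_k E(0)=0$ and identify the Hessian of $E$ at $k=0$ with the diffusion matrix, up to the elementary factors in \eqref{eq:D}, so that $E(k)=-\sum_{i,j}D_{i,j}k_ik_j+o(|k|^2)$; feeding this into the Tauberian argument of the first paragraph produces the limit in \eqref{eq:D}. The spectral gap guarantees that the remaining, non-diffusive modes decay and contribute nothing in the Cesàro limit, which is precisely what makes the limit exist rather than merely a $\limsup$/$\liminf$.

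Finally, isotropy is forced by Assumption \ref{ass:symmetry}: permutation invariance makes all $D_{i,i}$ equal, while the reflection $R_i$ sends $X_iX_j\mapsto -X_iX_j$ for $i\neq j$ and leaves the law of $\rho_t$ invariant, so $D_{i,j}=0$ off the diagonal and the drift vanishes; hence $D_{i,j}=D\delta_{i,j}$ with $D=d^{-1}\sum_i D_{i,i}$. Finiteness is immediate from the boundedness of $(z-Q\mc{G}_\omega Q)^{-1}$ provided by the gap. The genuinely hard part is strict positivity, $D>0$: one must show that the short-lived coherences nevertheless transport probability between neighboring diagonal sites at a nonzero rate whenever $g>0$. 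I would establish this through a Green--Kubo type variational lower bound for the Hessian of $E$, arguing that the relevant current--current correlation cannot vanish because the hopping term $u$ couples the diagonal to coherences that the noise returns to the diagonal one site away; degeneracy would require this coupling to be annihilated, which the explicit form of $\mc{G}_\omega$ forbids. I expect this positivity estimate, together with the uniform-in-disorder control of the $Q$-block that underlies it, to be the main obstacle.
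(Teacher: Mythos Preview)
Your outline has the right overall shape—Fourier in the center-of-mass, Tauberian reduction to a resolvent, Feshbach onto a gapped sector, and symmetry for isotropy—but it contains a genuine gap in the treatment of the disorder, and it differs from the paper in a way that matters.

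\medskip

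\textbf{The gap.} You Fourier-transform in $X$ after averaging over $\omega$, and then speak of ``fiber dynamics'' and ``the small pole $z=E(k)$ of the averaged fiber resolvent'' to be located by regular perturbation theory. But $\Ev{(z-\mc{G}_\omega)^{-1}}$ is \emph{not} the resolvent of any operator on $\ell^2(\Z^d\times\Z^d)$, so there is no operator whose isolated eigenvalue $E(k)$ you are perturbing. Equivalently, your Feshbach reduction is performed sample-wise, producing a \emph{random} effective generator $\mc{G}_{\mathrm{eff}}(z,\omega)$ on the diagonal; averaging it does not give you control of $\Ev{(z-\mc{G}_{\mathrm{eff}}(z,\omega))^{-1}}$, which is what actually enters the second moment. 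The assertion that the averaged object is ``jointly analytic in $(z,k)$ near the origin'' with a simple zero is exactly the hard point, and nothing in your outline supplies it.

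The paper closes this gap by a different device: it works on the \emph{augmented} Hilbert space $\mc{H}=L^2(\Z^d\times\Z^d\times\Omega)$, on which the generator is \emph{deterministic}, and Fourier-decomposes with respect to the joint shift $S_a\Psi(X,\xi,\omega)=\Psi(X-2a,\xi,\tau_{-a}\omega)$. The fibers $\wh{\mc{H}}=L^2(\Z^d\times\Omega)$ now carry the disorder as a coordinate, and the fiber generator $\mc{G}_{\vec k}=\im\mc{A}_{\vec k}-g\wh{\mc{L}}_{\vec k}$ is an explicit, $\omega$-independent operator. The initial state and the observable both become $\delta_0\otimes 1$, which is an exact null vector of $\mc{G}_{\vec 0}$ and of $\mc{G}_{\vec 0}^\dagger$; this is what replaces your ``pole $E(k)$'' and makes the subsequent differentiation in $\vec k$ rigorous without any eigenvalue perturbation theory.

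\medskip

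\textbf{The Feshbach split is also different.} In the augmented fiber $\wh{\mc{H}}$, the orthogonal complement of $\delta_0\otimes 1$ decomposes as $\wh{\mc{H}}^\perp_0\oplus\wh{\mc{H}}^\perp_1$ with
\[
\wh{\mc{H}}^\perp_0=\{\delta_0\otimes f:\Ev{f}=0\},\qquad \wh{\mc{H}}^\perp_1=\{f:f(0,\omega)=0\}.
\]
The dissipative gap from \eqref{eq:exp-bound} lives only on $\wh{\mc{H}}^\perp_1$; the sector $\wh{\mc{H}}^\perp_0$—fluctuations of the population in $\omega$—has \emph{no} gap (indeed $\mc{G}_{\vec 0}$ vanishes there), and is precisely the piece your $P/Q$ split onto $\xi=0$ versus $\xi\neq 0$ cannot see. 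The paper handles it by a Schur complement that sends $\nicefrac{1}{\eta}\,Q_1\wh T_0 Q_0\wh T_0 Q_1$ to a projection $\Pi$ onto $\ker Q_0\wh T_0 Q_1$ as $\eta\downarrow 0$, and then checks explicitly that the relevant vector $\phi=(\delta_{e_1}-\delta_{-e_1})\otimes 1$ lies in $\ran\Pi$. This yields the closed formula $D=2u^2\ipc{\phi}{(\Pi\mc{G}_{\vec 0}\Pi)^{-1}\phi}$, from which finiteness ($D\le 2u^2/(cg)$) and strict positivity
\[
D \ \ge \ 2gu^2 c\,\norm{(\Pi\mc{G}_{\vec 0}\Pi)^{-1}\phi}^2 \ \ge \ \frac{4gu^2 c}{\norm{\mc{G}_{\vec 0}}^2}
\]
follow by two lines, with no variational or Green--Kubo argument needed.

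\medskip

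In short: your symmetry argument for $D_{i,j}=D\delta_{i,j}$ is correct and matches the paper, and your instinct to Feshbach onto a gapped coherence block is right; but the disorder must be absorbed into the Hilbert space \emph{before} the fiber decomposition, and the resulting Feshbach must account for the mean-zero $\xi=0$ sector in $\omega$. Without that, the perturbative pole $E(k)$ you invoke is not defined.
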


Let us now consider the behavior of the diffusion constant in the limit where the coupling with the heat bath tends to zero, $g\rightarrow 0$.  We are interested in the behavior of the diffusion constant $D$ as a function of $g$, so we will henceforth write $D(g)$. 
In particular, we would like to estimate the size of $D(g)$ if the hopping $u$ and the disorder strength are such that the strong localization result described in eq.\ \eqref{eq:AL} holds. 
In fact, we will not use the full strength of eq.\ \eqref{eq:AL}.
Instead, what is required below is simply a uniform bound on the second moment of the position, viz.
\begin{equation} \ell^2 \ := \ \sup_{t}  \sum_{x\in \Z^d} \frac{|x|^2}{d} \Ev{ \abs{\dirac{x}{\e^{-\im H_\omega t} }{0}}^2} < \infty \ .\label{eq:localization}
\end{equation}
\begin{thm}\label{thm:dissloc} Suppose eq.\ \eqref{eq:localization} holds.  Then 
\begin{equation}\label{eq:dissloc} D(g) \ = \ \Delta   g + o(g) \ ,	
\end{equation}
with $0< \Delta  < (1 + \nicefrac{1}{c}) \ell^2$.  (Here $c$ is the constant appearing in item 5 of assumption \ref{ass:r}.)
\end{thm}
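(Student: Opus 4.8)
The plan is to reduce the diffusion constant to a statement about the second moment and then perform a controlled expansion in $g$ around the free (dissipationless) Anderson evolution, for which \eqref{eq:localization} supplies a uniform $O(\ell^2)$ bound. Write $U_\tau\rho:=\e^{-\im H_\omega\tau}\rho\,\e^{\im H_\omega\tau}$ for the free propagator, $\hat x(\tau):=\e^{\im H_\omega\tau}\hat x\,\e^{-\im H_\omega\tau}$ for the Heisenberg position, and set $m(t):=\sum_x\abs{x}^2\Ev{\dirac{x}{\rho_t}{x}}$, so that $D=\frac{1}{d}\lim_{t\to\infty}m(t)/t$. Differentiating $m$ and inserting \eqref{eq:Lindblad}, the dissipator drops out of the second moment: since $(\mc L\rho)(X,0)=0$ one has $\Tr(\hat x^2\mc L\rho)=0$, so only the Hamiltonian current $\mc J:=\im\com{H_\omega}{\hat x^2}$ drives the spreading. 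Two structural facts will carry the argument. First, by item 3 of Assumption \ref{ass:r}, $\mc L$ annihilates every diagonal density matrix, whence $\mc L\rho_t=\mc L Q\rho_t$, where $Q$ projects onto the off-diagonal ($\xi\neq0$) sector. Second, localization controls the free spreading, $\Ev{\norm{(\hat x(\tau)-\hat x)\ket{0}}^2}=\Ev{\sum_x\abs{x}^2\abs{\dirac{x}{\e^{-\im H_\omega\tau}}{0}}^2}\le d\ell^2$ uniformly in $\tau$, and by translation covariance of the disorder law the same bound holds from any initial site.

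Next I would quantify the off-diagonal weight using the coercivity \eqref{eq:exp-bound}. Since the Hamiltonian part conserves $\norm{\rho_t}_2$, one finds $\frac{1}{2}\frac{d}{dt}\norm{\rho_t}_2^2=g\,\Re\ip{\rho_t,\mc L\rho_t}\le-gc\norm{Q\rho_t}_2^2$, and integrating with $\norm{\rho_0}_2^2=1$ yields the key a priori bound $\int_0^\infty\norm{Q\rho_s}_2^2\,ds\le\frac{1}{2gc}$. I then apply the exact Duhamel formula $\rho_t=U_t\rho_0+g\int_0^t U_{t-s}\mc L\rho_s\,ds$ to $m(t)$. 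The free term is $\le d\ell^2$ and contributes nothing to $D$. In the Duhamel term I subtract the vanishing quantity $\Tr(\hat x^2\mc L\rho_s)=0$ to replace $\hat x^2$ by the spreading operator $\hat x^2(t-s)-\hat x^2=\hat x\,\delta+\delta\,\hat x+\delta^2$, where $\delta:=\hat x(t-s)-\hat x$ obeys the uniform localization bound and $\mc L\rho_s=\mc L Q\rho_s$ is supported on the off-diagonal sector. The point is that the integrand saturates as both $s$ and $t-s$ grow, so the Duhamel term grows linearly in $t$ at a rate proportional to $g$; this simultaneously yields $D(g)=\Delta g+o(g)$ and identifies $\Delta$ as the saturated spreading rate.

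The upper bound $\Delta<(1+\nicefrac{1}{c})\ell^2$ comes from estimating the saturated integrand in the Hilbert--Schmidt pairing. The pure-spreading piece $\Tr(\delta^2\mc L Q\rho_s)$ and the diagonal part of the cross terms are controlled directly by the localization length, producing the contribution $\ell^2$ (the ``$1$''). In the remaining coherent cross term the bare, unbounded position $\hat x$ is paired with the diffusively spread $\rho_s$; here I would apply Cauchy--Schwarz together with the coercivity integral $\int_0^\infty\norm{Q\rho_s}_2^2\,ds\le\frac{1}{2gc}$, which trades one power of $g$ for the factor $\nicefrac{1}{c}$ and bounds this part by $\frac{1}{c}\ell^2$. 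Strict positivity $\Delta>0$ follows from a matching lower bound: retaining only the leading single-dephasing contribution, the state spreads freely by a strictly positive mean square (here $\ell^2>0$ because $u\neq0$) between dissipative events whose rate is of order $g$, the latter guaranteed by the coercivity and by the gain kernel being the Fourier transform of a nontrivial nonnegative measure (item 2 of Assumption \ref{ass:r}); hence $\liminf_{g\to0}D(g)/g>0$.

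The main obstacle is uniformity: I must show that everything beyond the leading linear term is genuinely $o(g)$ after division by $t$, i.e.\ that the limits $t\to\infty$ and $g\to0$ may be interchanged even though the diffusive regime sets in only for $t\gtrsim1/g$. The delicate estimate is precisely the coherent cross term, in which $\hat x$ acting on the diffusively spread $\rho_s$ (second moment $\sim m(s)$) must be balanced against the $\ell^2$-bounded spreading $\delta$ and the small off-diagonal source $Q\rho_s$; closing this as a self-consistent Gronwall-type inequality for $m(t)$, with the coercivity integral absorbing the dangerous factors, is where the real work lies. A secondary difficulty is that, because the potential is static, the disorder-averaged evolution is not a semigroup; I would therefore carry out all estimates realization by realization and average only at the end, using solely the mean bound \eqref{eq:localization} rather than the full localization statement \eqref{eq:AL} --- this is the point at which the techniques adapted from Ref.\ \cite{Schenker2014} enter.
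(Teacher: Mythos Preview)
Your structural observations are correct and useful: $(\mc{L}\rho)(X,0)=0$, $\mc{L}$ annihilates the diagonal sector, and the coercivity bound $\int_0^\infty\norm{Q\rho_s}_2^2\,ds\le (2gc)^{-1}$ all hold.  However, the proposal has a genuine gap at exactly the point you flag as ``where the real work lies.''  What you have outlined would plausibly yield two-sided bounds $D(g)=O(g)$, but the theorem asserts more: that the limit $\Delta=\lim_{g\downarrow 0}D(g)/g$ \emph{exists} and satisfies the stated sharp inequality.  Your saturation claim does not follow from the ingredients you list.  In the Duhamel integrand $\Tr\bigl((\hat x\delta+\delta\hat x+\delta^2)\mc L Q\rho_s\bigr)$ the factor $\rho_s$ depends on $g$ and equilibrates only on the time scale $1/g$, so taking $t\to\infty$ first and $g\to 0$ afterwards does not reduce to a $g$-independent integrand whose time average you could read off.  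Your Cauchy--Schwarz/Gronwall sketch for the cross term pairs an unbounded $\hat x$ (scale $\sqrt{m(s)}$) against $\norm{Q\rho_s}_2$; even granting a bound of the form $\sqrt{g D(g)/c}\cdot t$, this closes only to an order-of-magnitude estimate and produces neither convergence of $D(g)/g$ nor the specific constant $(1+\nicefrac{1}{c})\ell^2$.

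The paper avoids this difficulty by passing to the augmented space $\wh{\mc H}=L^2(\Z^d\times\Omega)$ and working with resolvents rather than time integrals.  One shows $D(g,\eta)=2u^2\bigl\langle\phi,(\im\mc A_0+g\wh{\mc L}+\eta)^{-1}\phi\bigr\rangle$ with $\phi=(\delta_{e_1}-\delta_{-e_1})\otimes 1$.  Localization \eqref{eq:localization} enters \emph{once}, to prove that $\psi:=\lim_{\eta\to 0}(\im\mc A_0+\eta)^{-1}\phi$ exists in $\wh{\mc H}$ with $\norm{\psi}^2\le\ell^2/(2u^2)$ (this is where $\ell^2$ appears).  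A second-order resolvent expansion in $g\wh{\mc L}$ then gives an \emph{exact} formula $D(g)=-2u^2 g\bigl\langle\overline\psi,\bigl[\wh{\mc L}-g\wh{\mc L}\Pi(\im\Pi\mc A_0\Pi+g\Pi\wh{\mc L}\Pi)^{-1}\Pi\wh{\mc L}\bigr]\psi\bigr\rangle$, with $\Pi$ the Feshbach projection from the proof of Theorem~\ref{thm:main}.  The existence of $\Delta$ now follows from a short weak-operator-topology lemma (if $A$ is normal with $\Re A\ge 0$ and $\Re B\ge c$, then $(\lambda A+B)^{-1}$ converges weakly as $\lambda\to\infty$), and the upper bound $(1+\nicefrac{1}{c})\ell^2$ drops out of $\norm{\wh{\mc L}}\le 1$ and $\norm{\psi}^2\le \ell^2/(2u^2)$.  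The missing idea in your approach is precisely this: converting the localization hypothesis into the existence of a fixed vector $\psi$ in the augmented Hilbert space, after which everything is finite-dimensional-flavored resolvent algebra rather than a delicate interchange of the $t\to\infty$ and $g\to 0$ limits.
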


By way of contrast, in the absence of disorder ($\lambda=0$ in eq.\ \eqref{eq:andersonHam}) we have the following elementary result.
\begin{thm}\label{thm:dissbal}
If $\lambda =0$ then
\begin{equation}
\label{eq:dissbal}	 D(g) \ = \ C \frac{u^2}{g}
\end{equation}
with $0 < C < \nicefrac{4}{c}$.
\end{thm}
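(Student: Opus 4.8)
\emph{Proof idea.}\quad The plan is to exploit that, for $\lambda = 0$, the generator in \eqref{eq:SErho} is translation invariant in the center-of-mass coordinate $X$, so that a Fourier transform in $X$ decouples the $X$-dependence and reduces the evolution to a family of generators acting on the relative coordinate $\xi$ alone. Writing $\wt\rho_t(k,\xi) = \sum_X \e^{-\im k\cdot X}\rho_t(X,\xi)$ (the sum running over the coset of $2\Z^d$ fixed by $X\pm\xi\in 2\Z^d$), the equation becomes $\partial_t \wt\rho_t(k,\cdot) = \mc{A}(k)\wt\rho_t(k,\cdot)$ with $\mc{A}(k) = -\im u\,\mc{H}(k) + g\mc{L}$, where $(\mc{H}(k)\phi)(\xi) = \sum_{\abs{e}=1}\e^{\im k\cdot e}[\phi(\xi+e) - \phi(\xi-e)]$ and $\mc{L}$ acts on $\xi$ as in \eqref{eq:Lspaceform}; the datum $\rho_0 = \ket{0}\bra{0}$ becomes $\wt\rho_0(k,\cdot) = \delta_0$, the indicator of $\xi = 0$, independent of $k$. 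I would then record four structural facts. First, by item 3 of Assumption \ref{ass:r} one has $\mc{L}\delta_0 = 0$ and $(\mc{L}\phi)(0) = 0$ for all $\phi$, so $\delta_0$ is simultaneously a right and a left null vector of $\mc{L}$. Second, item 4 gives $\Re\ip{\phi, -\mc{L}\phi} \ge c\norm{\phi}^2$ whenever $\phi(0)=0$, i.e.\ $-\mc{L}$ is coercive with gap $c$ on the range of $Q = 1-P$, where $P$ is the rank-one projection onto $\delta_0$. Third, $\mc{H}(0)=0$ and more generally $P\mc{A}(k)P = 0$ for every $k$, since hopping shifts $\xi$ by a unit vector and never maps $\delta_0$ to a multiple of $\delta_0$. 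Fourth, differentiating the phases $\e^{\im k\cdot e}$ shows that the intrinsic second-order term of $\mc{A}(k)$ at $k=0$ vanishes, while its first-order term is $\partial_{k_i}\mc{A}(0) =: \mc{A}'_i$ with $(\mc{A}'_i\phi)(\xi) = 2u[\phi(\xi+e_i) - \phi(\xi-e_i)]$.

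The diffusion constant is extracted from the slow ($\xi = 0$) mode. Taking the Laplace transform in $t$, $\wh\rho(k,\cdot;s) = (s - \mc{A}(k))^{-1}\delta_0$, the object controlling the position distribution is the matrix element $\wh\rho(k,0;s) = \ip{\delta_0,(s-\mc{A}(k))^{-1}\delta_0}$. By the Schur complement (Feshbach) formula, and using $P\mc{A}(k)P = 0$, this equals $[s - E(k,s)]^{-1}$ with effective dispersion $E(k,s) = P\mc{A}(k)Q\,(s - Q\mc{A}(k)Q)^{-1}\,Q\mc{A}(k)P$. Expanding to second order in $k$ and letting $s\to 0$, coercivity makes $(s - gQ\mc{L}Q)^{-1}$ converge to $\tfrac1g(-Q\mc{L}Q)^{-1}$, so that the entire $O(\abs{k}^2)$ part of $E$ comes from the second-order perturbation term, $E(k,0) = -\sum_{i,j} k_i k_j\, \wt D_{i,j} + O(\abs{k}^3)$ with $\wt D_{i,j} = \tfrac1g\ip{\mc{A}'_i\delta_0,\,(-Q\mc{L}Q)^{-1}\mc{A}'_j\delta_0}$. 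Matching the resulting long-time behavior $\wt\rho_t(k,0)\sim \e^{t E(k,0)}$ against the definition \eqref{eq:D} of $D_{i,j}$, and accounting for the elementary factor relating $X=2x$, gives $D_{i,j} = \tfrac12\Re\,\wt D_{i,j}$.

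It then remains to read off the scaling and the bound. Since $\mc{A}'_i \propto u$, one has $\wt D_{i,j}\propto u^2/g$, whence $D(g) = C u^2/g$; Assumption \ref{ass:symmetry} forces $\wt D_{i,j} = \wt D\,\delta_{i,j}$ and hence $D_{i,j} = D\delta_{i,j}$. For the bound I set $b = \mc{A}'_i\delta_0$, which is supported on $\xi = \pm e_i$ with $\norm{b}^2 = 8u^2$. Writing $w = (-Q\mc{L}Q)^{-1}b$, coercivity gives $c\norm{w}^2 \le \Re\ip{b,w} \le \norm{b}\norm{w}$, so $\norm{w}\le \norm{b}/c$ and therefore $\Re\ip{b,w}\le \norm{b}^2/c = 8u^2/c$; thus $D = \tfrac{1}{2g}\Re\ip{b,w} \le 4u^2/(cg)$, i.e.\ $C \le 4/c$. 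Positivity $C>0$ follows because $\Re\ip{b,w} = \Re\ip{(-\mc{L})w,w}\ge c\norm{w}^2 > 0$ as $b\neq 0$ for $u\neq 0$, and the inequality is strict because $b$, being supported on only two sites, does not saturate the coercivity estimate.

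The main obstacle is not the formal computation but the rigorous justification that the leading pole of the Schur complement controls the $t\to\infty$ asymptotics of the second moment. Concretely, one must show that $(s - \mc{A}(k))^{-1}$ is analytic and uniformly bounded for $(k,s)$ in a neighborhood of $(0,0)$ with $\Re s > 0$, that the remainder terms in the $k$- and $s$-expansions of $E(k,s)$ do not contribute to the linear-in-$t$ growth, and that differentiating twice in $k$ may be interchanged with the inverse Laplace transform. Because $\lambda = 0$ renders everything explicit in Fourier space — no disorder average is needed and $\mc{A}(k)$ is a bounded analytic family with a spectral gap away from the one-dimensional slow subspace — these estimates are quantitative and elementary, which is precisely what makes the statement simple; the same steps for $\lambda\neq 0$ are exactly what requires the heavier machinery underlying the earlier theorems.
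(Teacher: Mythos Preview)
Your proposal is correct and follows essentially the same route as the paper: a Fourier transform in the center-of-mass variable reduces the problem to fibers in $\xi$, a Feshbach/Schur complement isolates the slow mode at $\delta_0$, and the formula $D = \tfrac{2u^2}{g}\Re\ipc{\phi}{(-Q\mc{L}Q)^{-1}\phi}$ with $\phi=\delta_{e_1}-\delta_{-e_1}$ then yields the bound via the coercivity estimate \eqref{eq:exp-bound}. The paper's own proof is a one-line specialization of the general formula $D = 2u^2\ipc{\phi}{(\Pi\mc{G}_0\Pi)^{-1}\phi}$ from the proof of Theorem~\ref{thm:main}, noting that for $\lambda=0$ one has $\mc{A}_0=\wh T_0=0$; your version simply re-derives that reduction directly in the translation-invariant setting, and your caveat about justifying the pole-matching is exactly what the paper handles in Section~2 via the explicit second-derivative computation and Tauberian theorems.
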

We observe that the diffusion constant diverges, as $g\rightarrow 0$; the reason being that, without disorder and without dissipation, the motion of the particle is \textit{ballistic}.

\subsection{Properties of the gain kernel}\label{sec:properties}
Let $\di \vec{q}$ denote \emph{normalized} Haar measure on the $d$-torus $\mathbb{T}^d$,  $\int_{\mathbb{T}^d} \di \vec{q} = 1$.  The measure $\mu$ appearing in eq.\ \eqref{eq:r} need not be absolutely continuous with respect to the product measure $\di \vec{p}\times \di \vec{q}$ on $\mathbb{T}^d\times \bb{T}^d$.  
Nonetheless, we will write (using distributional notation)
$$ \mu(\di \vec{p},\di \vec{q}) \ = \ \wh{r}(\vec{p},\vec{q}) \di \vec{p} \di \vec{q}$$
where $\wh{r}$ is a (possibly singular) positive distribution on $\bb{T}^d\times \bb{T}^d$ \tem \ see eq.\ \eqref{eq:r-example} below. 
In terms of $\wh{r}$, the gain and loss operators have the following expressions 
\begin{equation}\label{W gain term}
(G \rho^W)(X,\vec{p}) \ := \ \int_{\mathbb{T}^d} \rho^W(X,\vec{q})	\wh{r}(\vec{p},\vec{q}) \di \vec{q}
\end{equation}
and 
\begin{equation}
\label{W loss term}
(L \rho^W)(X, \vec{p})\ := \ \left [ \int_{\mathbb{T}^d} \wh{r}(\vec{q},\vec{p}) \di \vec{q}
 \right ]  \rho^W(X,\vec{p})
\end{equation}
Here $\rho^W$ is the \emph{Wigner transform} of $\rho$, which is the Fourier transform of $\rho(X,\xi)$ in the $\xi$ variable:
\begin{equation}\label{eq:Wigner}
\rho^W (X,\vec{p}) \ := \ \sum_{\xi \in \Z^d} \e^{\im \vec{p}\cdot \xi} \rho(X,\xi)
\end{equation} 
The variable $\vec{p}$ is the \emph{quasi-momentum} of the particle; it ranges over the $d$-torus $\mathbb{T}^d$, which is the  Brillouin zone corresponding to $\mathbb{Z}^d$.  
Although it need not be positive, the function $\rho^W(X,\vec{p})$ is interpreted as a density for the distribution of the particle position and quasi-momentum in phase space $\mathbb{Z}^d\times \mathbb{T}^d$.  

When acting on the Wigner transform the Lindblad operator has the form
\begin{multline}
\label{W Lindblad}
(\mc{L} \rho^W)(X,\vec{p})  \ := \  (G\rho^W)(X,\vec{p}) - (L\rho^W)(X,\vec{p})\\ = \
 \int_{\mathbb{T}^d} \left [ \wh{r}(\vec{p}, \vec{q}) \rho^W(X,\vec{q})-\wh{r}(\vec{q}, \vec{p})\rho^W(X,\vec{p}) \right ]  \di \vec{q}
\end{multline}
Thus, in the fiber over $X$, $\mc{L}$ acts as the generator of a hopping process in the quasi-momentum:  $\wh{r}(\vec{p},\vec{q}) \di \vec{q}$ is the rate at which the quasi-momentum of the particle jumps from $\vec{q}$ to $\vec{p}$. 
To be able to interpret $\mc{L}$ as the generator of a stochastic process, we need $\wh{r}$ to be a positive measure. Thus we take $r$ to be of positive type \tem \ item 2 of Assumption \ref{ass:r}. 

Item 3 of Assumption \ref{ass:r} is equivalent to
\begin{equation}\label{eq:fucondition}\int_{\mathbb{T}^d} \wh{r}(\vec{p},\vec{q}) \di \vec{q} \ = \ \int_{\mathbb{T}^d} \wh{r}(\vec{q},\vec{p})\di \vec{q}	\ .
\end{equation}
Thus the generator of the jump process may just as well be written as  
\begin{equation}\label{eq:altL}
(\mc{L} f)(\vec{p})  \  = \
 \int_{\mathbb{T}^d} \di \vec{q} \, \wh{r}(\vec{p}, \vec{q}) \left [  f(\vec{q})-f(\vec{p}) \right ]
\end{equation}
for $f\in L^2(\mathbb{T}^d)$.
It follows that $\mathcal{L} \mathbf{1}=0$ where $\mathbf{1}$ is the identity function $\mathbf{1}(\vec{p})=1$ on the torus, i.e., Haar measure $\di \vec{p}$ is an invariant measure for the momentum jump process.  
In fact, by item 4 of Assumption \ref{ass:r}, Haar measure is the \emph{unique} invariant probability measure. 
Indeed, eq.\ \eqref{eq:exp-bound} is readily seen to be equivalent to the following estimate:
\begin{multline} 
- 2 \Re \ipc{f}{\mathcal{L}f}_{L^2(\mathbb{T}^d)} \ = \ 
\int_{\mathbb{T}^d\times \mathbb{T}^d}  \wh{r}(\vec{p}, \vec{q})\di \vec{p}\di \vec{q} \abs{f(\vec{p}) - f(\vec{q})}^2 \\
\ge \ c \int_{\mathbb{T}^d\times \mathbb{T}^d} \di \vec{p}\di \vec{q} \abs{f(\vec{p}) -f(\vec{q})}^2
\label{eq:condition}
\end{multline}
for every $f\in C(\mathbb{T}^d\times \mathbb{T}^d)$.  
Eq.\ \eqref{eq:condition} is a ``spectral gap'' condition for the generator of the jump process in quasi-momentum space. 
Assuming that this condition holds amounts to requiring exponentially fast mixing of the jump process.

Item 3 of Assumption \ref{ass:r} implies that the operators 
$$Gf(\xi) \ = \ \sum_{\eta}r(\xi,\eta) f(\eta) \quad \text{and} \quad Lf(\xi) \ = \ \sum_{\xi}r(0,\eta-\xi) f(\eta)$$
are bounded on $\ell^2(\Z^d)$.
A natural condition on the jump process which gives the bound is to require the total rate of jumping out of any fixed quasi-momentum $\vec{p}$ to be uniformly bounded, i.e.,
\begin{equation}\label{eq:boundedjumping}
	M \ := \ \sup_{\vec{p}} \int_{\bb{T}^d}  \wh{r}(\vec{q},\vec{p})\di \vec{q} \ < \ \infty \ .
\end{equation}
\begin{lem} If the measure $\wh{r}(\vec{p},\vec{q})\di \vec{p} \di \vec{q}$ satisfies eqs.\ \eqref{eq:fucondition} and \eqref{eq:boundedjumping} then $G$ and $L$ are bounded operators on $\ell^2(\Z^d).$
\end{lem}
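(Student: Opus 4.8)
The plan is to remove the $\xi$-dependence by passing to the Wigner transform \eqref{eq:Wigner} which, for each fixed $X$, is just the Fourier-series map $\ell^2(\Z^d)\to L^2(\mathbb{T}^d)$, $f\mapsto \sum_\xi \e^{\im \vec{p}\cdot\xi}f(\xi)$. Since Haar measure on $\mathbb{T}^d$ is normalized, this map is unitary (Plancherel), so $G$ and $L$ are bounded on $\ell^2(\Z^d)$ if and only if their conjugates by this unitary are bounded on $L^2(\mathbb{T}^d)$. By \eqref{W gain term} and \eqref{W loss term} these conjugates are, respectively, the integral operator $f\mapsto \int_{\mathbb{T}^d}\wh{r}(\vec{p},\vec{q})f(\vec{q})\,\di\vec{q}$ and the multiplication operator $f\mapsto W(\vec{p})f(\vec{p})$ with $W(\vec{p}):=\int_{\mathbb{T}^d}\wh{r}(\vec{q},\vec{p})\,\di\vec{q}$. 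It therefore suffices to bound these two operators on $L^2(\mathbb{T}^d)$.

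The multiplication operator is immediate: $W$ is exactly the density of the second marginal of $\mu$, and its essential supremum is the quantity $M$ of \eqref{eq:boundedjumping}. Hence multiplication by $W$ is bounded on $L^2(\mathbb{T}^d)$ with norm $\norm{W}_\infty=M<\infty$, giving $\norm{L}\le M$. Only the bounded-jumping hypothesis \eqref{eq:boundedjumping} enters here; the detailed-balance relation \eqref{eq:fucondition} plays no role for $L$.

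For $G$ I would run the Schur test, but phrased on the measure $\mu$ rather than on $\wh{r}$, since by item 2 of Assumption \ref{ass:r} the latter may be a singular distribution. For test functions $f,h$ write the bilinear form as $\ipc{Gf}{h}=\iint \overline{h(\vec{p})}\,f(\vec{q})\,\mu(\di\vec{p},\di\vec{q})$ and apply Cauchy--Schwarz to the \emph{positive} measure $\mu$, splitting the integrand as $\overline{h(\vec{p})}\cdot f(\vec{q})$. This bounds the form by the product of $\big(\iint|h(\vec{p})|^2\,\mu\big)^{1/2}$ and $\big(\iint|f(\vec{q})|^2\,\mu\big)^{1/2}$, and integrating out one variable collapses these to $\int|h|^2\rho_1\,\di\vec{p}$ and $\int|f|^2\rho_2\,\di\vec{q}$, where $\rho_1,\rho_2$ are the densities of the two marginals of $\mu$. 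Now $\rho_2=W$ is bounded by $M$ as above, while the first marginal density $\rho_1(\vec{p})=\int\wh{r}(\vec{p},\vec{q})\,\di\vec{q}$ is, by the detailed-balance identity \eqref{eq:fucondition}, equal to $\int\wh{r}(\vec{q},\vec{p})\,\di\vec{q}=W(\vec{p})$ and hence also bounded by $M$. Taking the supremum over $\norm{h}_{L^2}=1$ yields $\norm{G}\le\sqrt{\norm{\rho_1}_\infty\norm{\rho_2}_\infty}=M$.

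The only genuine subtlety, and the step I would treat most carefully, is the possible singularity of $\wh{r}$: the Schur estimate must be carried out against the measure $\mu$ itself, as above, rather than by manipulating $\wh{r}$ pointwise, and the formal bilinear-form manipulations should first be justified for finitely supported $f$ (equivalently, trigonometric-polynomial $h$), where every sum and integral converges absolutely, and then extended to all of $L^2(\mathbb{T}^d)$ by density. Everything else is routine; the detailed-balance identity \eqref{eq:fucondition} is precisely what transfers the bounded-jumping bound from the loss marginal to the gain marginal, closing the estimate for $G$.
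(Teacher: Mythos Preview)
Your proposal is correct and follows essentially the same route as the paper: pass to the Fourier side via Plancherel, bound $L$ as multiplication by $W$ with $\norm{W}_\infty=M$, and bound $G$ by a Cauchy--Schwarz/Schur-test argument that uses \eqref{eq:fucondition} to control the first marginal of $\mu$ by $M$ as well. Your bilinear-form formulation against the measure $\mu$ is in fact a shade more careful than the paper's pointwise manipulation of the possibly singular $\wh{r}$, but the underlying argument and constants are identical.
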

\begin{rem*} That is, eqs.\ \eqref{eq:fucondition} and \eqref{eq:boundedjumping} are sufficient for item 3 of Assumption \ref{ass:r}.	
\end{rem*}
\begin{proof} 
Taking Fourier transforms we find that
$$\widehat{G\phi}(\vec{p}) \ = \  \int_{\bb{T}^d} 	\wh{r}(\vec{p},\vec{q}) \wh{\phi}(\vec{q}) \di \vec{q}\quad \text{and} \quad \widehat{L\phi}(\vec{p}) \ = \   \int_{\bb{T}^d} \wh{r}(\vec{q},\vec{p})\di \vec{q} \text{   }
 \wh{\phi}(\vec{p})$$
where $\wh{\phi}(\vec{p}) = \sum_{\xi} \e^{\im \xi\cdot \vec{p}} \phi(\xi)$ is the Fourier transform of $\phi$.  By Plancherel's Theorem,
$$\norm{L\phi}_{2} \ \le \ M \norm{\phi}_2 \ ,$$
with $M$ as in eq.\ \eqref{eq:boundedjumping}.
To bound $G\phi$, we note that
$$ \abs{\wh{G\phi}(\vec{p})}^2 \ \le \ M \int_{\bb{T}^d} \abs{\wh{\phi}(\vec{q})}^2\wh{r}(\vec{p},\vec{q})\di \vec{q}$$
by the Cauchy-Schwarz inequality, since by eq.\ \eqref{eq:fucondition} 
$$M \ = \ \sup_{\vec{p}\in \bb{T}^d} \int_{\bb{T}^d} \wh{r}(\vec{p},\vec{q})\di \vec{q}.$$ 
Integrating over $\vec{p}$ and applying Plancherel's Theorem again, we find that
$$ \norm{G\phi }_2 \ \le \ M \norm{\phi}_2\ . \qedhere$$
\end{proof}

The results we have arrived at, so far, are summarized in the following
\begin{prop}
Let $\wh{r}(\vec{p},\vec{q})\di\vec{p}\di\vec{q}$ be a positive measure on $\bb{T}^d\times\bb{T}^d$ that is periodic with period $\pi$ under joint translations of $\vec{p}$ and $\vec{q}$, i.e.,
\begin{equation}\label{eq:periodic}
\wh{r}(\vec{p}+\pi \vec{n},\vec{q}+\pi \vec{n}) \ = \ \wh{r}(\vec{p},\vec{q})	
\end{equation}
 If $r$ is defined by eq.\ \eqref{eq:r} and eqs.\ \eqref{eq:fucondition}, 
 \eqref{eq:condition} and \eqref{eq:boundedjumping} hold for $\widehat{r}$, then $r$ satisfies Assumption \ref{ass:r}.
\end{prop}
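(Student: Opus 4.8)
The plan is to verify the five items of Assumption~\ref{ass:r} one at a time, exploiting the fact that the discussion preceding the proposition has already reduced four of them to the stated hypotheses, so that the only item requiring an actual computation is item~1. The periodicity hypothesis eq.~\eqref{eq:periodic} will be used solely to obtain that item.

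First I would dispose of items~2 through~5. Item~2 is immediate: by hypothesis $\wh{r}(\vec{p},\vec{q})\,\di\vec{p}\,\di\vec{q}$ is a non-negative measure, so setting $\mu(\di\vec{p},\di\vec{q})=\wh{r}(\vec{p},\vec{q})\,\di\vec{p}\,\di\vec{q}$ in eq.~\eqref{eq:r} exhibits $r$ as the Fourier transform of a non-negative measure. Item~3 is equivalent to eq.~\eqref{eq:fucondition}, and item~4 (which is eq.~\eqref{eq:exp-bound}) is equivalent to the spectral-gap estimate eq.~\eqref{eq:condition}; both equivalences were recorded above, and both hypotheses are assumed. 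Finally, item~5 --- the boundedness on $\ell^2(\Z^d)$ of the operator $G$ whose kernel is $r$ --- is exactly the conclusion of the preceding Lemma, whose hypotheses eqs.~\eqref{eq:fucondition} and~\eqref{eq:boundedjumping} are among the present assumptions.

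The one genuine computation is item~1, which I would establish by a half-period shift in the Fourier integral eq.~\eqref{eq:r}. Fixing $\vec{n}\in\Z^d$ and using translation invariance of Haar measure on $\bb{T}^d$ to shift both $\vec{p}$ and $\vec{q}$ by $\pi\vec{n}$, the density $\wh{r}$ is left unchanged by eq.~\eqref{eq:periodic} while the exponential acquires the factor $\e^{\im\pi\vec{n}\cdot(\xi-\eta)}$, yielding the identity $r(\xi,\eta)=\e^{\im\pi\vec{n}\cdot(\xi-\eta)}\,r(\xi,\eta)$ for every $\vec{n}\in\Z^d$. Whenever $\xi+\eta\notin2\Z^d$ --- equivalently $\xi-\eta\notin2\Z^d$, since the two differ by $2\eta$ --- some coordinate $\xi_j-\eta_j$ is odd, and taking $\vec{n}$ to be the unit vector in the $j$-th coordinate direction gives $\e^{\im\pi(\xi_j-\eta_j)}=-1$, forcing $r(\xi,\eta)=0$. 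This is precisely item~1, and assembling it with items~2 through~5 proves the proposition. I expect no serious obstacle: the only point demanding care is checking that the shift $\vec{p}\mapsto\vec{p}+\pi\vec{n}$ is a well-defined measure-preserving transformation of the torus $\bb{T}^d=[0,2\pi)^d$ and that the passage between the conditions $\xi+\eta\in2\Z^d$ and $\xi-\eta\in2\Z^d$ is handled correctly.
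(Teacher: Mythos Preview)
Your proposal is correct and follows the same route as the paper: the proposition is presented there as a summary of the surrounding discussion, with items~2--5 reduced to the stated hypotheses exactly as you describe, and the remark after the proposition that eq.~\eqref{eq:periodic} is what yields item~1. Your explicit half-period shift computation for item~1 is the natural way to carry this out, and the paper only alludes to it rather than writing it down.
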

Eq.\ \eqref{eq:periodic} is required for item 1 of Assumption \ref{ass:r}, namely that $r(\xi,\eta)=0$ if $\xi+\eta\not \in 2\Z^d$, to hold.  
In this regard, it is useful to note that
$$\rho^W(X,\vec{p}+\vec{n}\pi) \ = \ \sum_{\xi \in \Z^d} \e^{\im (\vec{p}+\pi \vec{n})\cdot \xi} \rho(X,\xi) \ = \ \e^{\im \pi \vec{n} \cdot X} \rho^W(X,\vec{p})$$
for $\vec{n}\in \mathbb{Z}^d$, since in the sum over $\xi$ we have  $X- \xi \in 2 \mathbb{Z}^d$. 
Thus, $\rho^W$ is either periodic or anti-periodic with period $\pi$ in each component of the quasi-momentum, depending on whether the corresponding component of $X$ is even or odd.  Eq.\ \eqref{eq:periodic} guarantees that $G\rho^W$ and $L \rho^{W}$ also have this property.

Lindblad operators are often assumed to  satisfy a ``detailed balance condition'' at the temperature, $\beta^{-1}$, of the heat bath. 
For a \textit{free} particle detailed balance can be formulated as follows. 
Let $\varepsilon(\vec{p})$  denote the energy of a freely moving particle with quasi-momentum $\vec{p}$ \tem \ i.e., 
$$\varepsilon(\vec{p}) = 2u \left (d -\sum_{i=1}^{d} \cos(p_i) \right )$$
for the isotropic nearest neighbor hopping  considered here. 
Detailed balance at temperature  $\beta^{-1}$ is the condition that 
\begin{equation}
\label{detailed balance}
\wh{r}(\vec{p},\vec{q})  = \e^{\beta(\varepsilon(\vec{p})-\varepsilon(\vec{q}))}\wh{r}(\vec{q},\vec{p}) \ .
\end{equation}
For example, for a Lindblad generator describing the interaction of the particle with a thermal reservoir of non-interacting bosons with an even dispersion law $\omega(\vec{k})=\omega(-\vec{k})$ at temperature 
$\beta^{-1}$, one has that
\begin{multline} \label{eq:r-example}\wh{r}(\vec{p},\vec{q}) \ = \ F(\vec{p}-\vec{q}) \Bigg [ \frac{1}{1- \e^{-\beta \omega(\vec{p}-\vec{q})}} \delta(\varepsilon(\vec{p})-\varepsilon(\vec{q})-\omega(\vec{p}-\vec{q}))  \\
+ \frac{\e^{-\beta \omega(\vec{p}-\vec{q})} }{1- \e^{-\beta \omega(\vec{p}-\vec{q})}}\delta(\varepsilon(\vec{p})-\varepsilon(\vec{q})+\omega(\vec{p}-\vec{q})) \Bigg ]
\end{multline}
where $\delta(\cdot)$ denotes a Dirac delta function and the ``form factor'' $F$ is a non-negative, even function. 

In the present context detailed balance, as in eq.\ \eqref{detailed balance}, is only consistent with item 4 of Assumption \ref{ass:r} if $\beta=0$ (infinite temperature). 
At $\beta=0$, detailed balance is just symmetry of the gain kernel, $\wh{r}(\vec{p},\vec{q})=\wh{r}(\vec{q},\vec{p})$.  
However, symmetry is stronger than eq.\ \eqref{eq:fucondition} and detailed balance and symmetry, as such,  play no role in our analysis.
An explicit, and symmetric, example of a gain kernel satisfying our requirements is given by a suitable limit of eq.\ \eqref{eq:r-example} as $\beta \rightarrow 0$, i.e.,
\begin{equation} \label{eq:r-example2}\wh{r}(\vec{p},\vec{q}) \ = \ F(\vec{p}-\vec{q}) \Bigg [  \delta(\varepsilon(\vec{p})-\varepsilon(\vec{q})-\omega(\vec{p}-\vec{q}))  \
+ \ \delta(\varepsilon(\vec{p})-\varepsilon(\vec{q})+\omega(\vec{p}-\vec{q})) \Bigg ] \ .
\end{equation}
For a quantum particle whose dynamics (in the absence of thermal noise) is governed by a random 
Schr\"{o}dinger operator of the kind studied in this paper, detailed balance is a somewhat awkward condition. If the disorder is large it is natural to neglect the kinetic energy term of the particle in \eqref{detailed balance}, i.e., to set $\varepsilon \equiv 0$. In this case, our Assumption \ref{ass:r} is acceptable. A (not very natural) example of a gain kernel to which our analysis would apply, with $\beta>0$, can be found in Ref.\ \onlinecite{DeRoeck2011}. But these matters ought to be studied more thoroughly.

\section{The Diffusion Constant for Lindblad Dynamics with Disorder}
In this section we prove Theorem \ref{thm:main}. To begin with, we note that by Assumption \ref{ass:symmetry}
$$\sum_{X\in \Z^d } X_i X_j \Ev{ \rho_t(X,0)}  \ = \ 0\ , \quad i \neq j,$$
for all $t$. 
Indeed, let 
$$\wt{\rho}_t(X,\xi) \ = \ \rho_t(R_iX,R_i\xi) $$
where $R_i$ denotes  inversion of the $i^{\mathrm{th}}$ coordinate.  Because  $\wt{\rho_t}$ and $\rho_t$ have the same distribution, one has that
$$\sum_{X} X_i X_j \Ev{\wt{\rho}_t(X,0) } \ = \ \sum_{X} X_i X_j \Ev{\rho_t(X,0)}\ .$$
However, the definition of $\widetilde{\rho}_{t}$ implies that
$$\sum_{X} X_i X_j \Ev{\wt{\rho}_t(X,0) } \ = \ - \sum_{X} X_i X_j \Ev{\rho_t(X,0)}$$
if $i\neq j$.
Likewise, by permutation symmetry, all diagonal matrix elements are the same
$$\sum_{X} X_i^2 \Ev{\rho_t(X,0)} \ = \ \sum_{X} X_j^2 \Ev{\rho_t(X,0)}$$
$i,j=1,\ldots,d$. Thus the off-diagonal elements of the diffusion matrix $D_{i,j}$ defined by eq.\ \eqref{eq:D} vanish, and the diagonal elements are all equal, provided they are well defined. Thus, to analyze the diagonal elements of the diffusion matrix, it suffices to consider $D_{1,1}$.

Note that $\rho_t$ is a random variable depending on the disorder configuration $\omega\in \Omega$.  In what follows, we will emphasize this by writing $\rho_t(X,\xi,\omega)$. 
The initial condition is the density matrix $\rho_0$ given by
$$\rho_0(X,\xi,\omega) \ = \ \delta_0(X) \delta_0(\xi) {\bf{1}}_{\omega} \ .$$ 
We introduce the Hilbert space 
$$\mc{H} \ = \ \setb{\Psi \in L^2(\Z^d\times \Z^d\times \Omega)}{\Psi(X,\xi,\omega)=0\text{ if } X\pm \xi\not \in 2 \Z^d } \ .$$

Finite-group-velocity estimates for the propagation of a quantum particle on the lattice show that 
\begin{equation}\sum_{X,\xi} \e^{m|X|} \abs{\rho_t(X,\xi,\omega)}^2 \le  \e^{C_m t} \ .
\label{eq:FGV}
\end{equation}
Here $m$ and $C_m$ are positive, finite constants; see Lemma \ \ref{lem:FGVE}. 
In particular $\rho_t\in \mc{H}$, for each $t\ge 0$.  
 
Translation of the system by a lattice vector $a$ is given by a unitary map on $\mc{H}$ defined by
$$S_a \Psi(X,\xi,\omega) \ = \ \Psi(X-2a,\xi,\tau_{-a} \omega)$$
where $\tau_a \omega(x) \ = \ \omega(x-a)$.   
We define the following generalized Fourier transform on $\mc{H}$:
$$\mc{F}\Psi (\xi,\omega,\vec{k}) \ := \ \sum_{a} \e^{-\im \vec{k}\cdot a} S_{a}\Psi(\xi,\xi,\omega) \ = 
\ \sum_{a}  \e^{\im \vec{k}\cdot a} \Psi(\xi+2a,\xi,\tau_{a}\omega) \ .$$
This Fourier transform leads to a direct-integral decomposition of $\mc{H}$, with fibers isomorphic to 
$\wh{\mc{H}} :=  L^2(\Z^d\times \Omega)$.  Given $\Psi\in \mc{H}$, we let $\wh{\Psi}_{\vec{k}}(\xi,\omega) \ := \ \mc{F}\Psi (\xi,\omega,\vec{k})$. We then have that $\wh{\Psi}_{\vec{k}}\in \wh{\mc{H}}$, for almost every 
$\vec{k}$, and 
$$\norm{\Psi}_{\mc{H}}^2 \ = \ \int_{\bb{T}^d} \norm{\wh{\Psi}_{\vec{k}}}^2_{\wh{\mathcal{H}}} \di \vec{k}\ .$$
 
\begin{lem}\label{lem:main} 
Suppose that $\rho_t(X,\xi,\omega)$ solves eq.\ \eqref{eq:SErho}. Then
\begin{equation}
\label{Lindblad in fibres}
\partial_t \wh{\rho}_{\vec{k};t} \ = \  \left ( - i \mc{A}_\vec{k} + g \wh{\mc{L}}_{\vec{k}} \right )\wh{\rho}_{\vec{k};t}
\end{equation}
where $\mc{A}_{\vec{k}} \ = \ u \wh{T}_{\vec{k}}  + \lambda \wh{V}$, with
\begin{enumerate}
\item $\displaystyle\wh{T}_{\vec{k}} f(\xi,\omega) \ = \ \sum_{|e|=1} \left [ f(\xi+e,\omega) - \e^{-\im \vec{k}\cdot e} f(\xi+e,\tau_{e}\omega) \right ] ,$
\item $\displaystyle \wh{V} f(\xi,\omega) \ = \ \left (\omega(\xi) - \omega(0) \right ) f(\xi,\omega)$, \text{    }
and
\item $\displaystyle \wh{\mc{L}}_{\vec{k}}  f(\xi,\omega) \ = \ \sum_{\eta\in \mathbb{Z}^d} \e^{\im \vec{k}\cdot\frac{\xi-\eta}{2}} \left [ r(\xi,\eta) - r(0,\eta-\xi) \right ] f(\eta,\tau_{\frac{\eta-\xi}{2}} \omega).$
\end{enumerate}
\end{lem}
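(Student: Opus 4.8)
The plan is to differentiate the fibre function
$$\wh{\rho}_{\vec{k};t}(\xi,\omega) \ = \ \sum_{a}\e^{\im\vec{k}\cdot a}\rho_t(\xi+2a,\xi,\tau_a\omega)$$
in $t$, substitute the equation of motion \eqref{eq:SErho} for $\partial_t\rho_t(X,\xi,\omega)$ evaluated at $X=\xi+2a$ and disorder configuration $\tau_a\omega$, and then reorganise each of the three resulting terms into an operator acting on $\wh{\rho}_{\vec{k};t}$ by a change of the summation variable $a$. Interchanging $\partial_t$ with the sum over $a$, together with the absolute convergence of every sum that appears, is guaranteed by the finite-group-velocity estimate \eqref{eq:FGV}, which keeps $\rho_t$, and the generator applied to it, exponentially decaying in $|X|$ and hence in $\mc{H}$; I would dispose of this point first.

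The disorder term is the conceptual core of the computation and the reason the transform is twisted by $\tau_a$. Writing $X=\xi+2a$ gives $\tfrac{X+\xi}{2}=\xi+a$ and $\tfrac{X-\xi}{2}=a$, whence
$$(\tau_a\omega)(\xi+a)-(\tau_a\omega)(a) \ = \ \omega(\xi)-\omega(0),$$
which is \emph{independent of} $a$. This factor therefore pulls out of the $a$-sum and turns the (non translation invariant) random potential into the single fibre multiplication operator $\lambda\wh{V}$, with $\wh{V}f(\xi,\omega)=(\omega(\xi)-\omega(0))f(\xi,\omega)$.

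For the hopping term I would treat its two summands separately. The summand $\rho_t(X+e,\xi+e,\tau_a\omega)$ already has position and difference arguments in the pattern $(\xi'+2a,\xi')$ with $\xi'=\xi+e$, so it reproduces $\wh{\rho}_{\vec{k};t}(\xi+e,\omega)$, the first term of $\wh{T}_{\vec{k}}$. For $\rho_t(X+e,\xi-e,\tau_a\omega)$ the shift $a\mapsto a+e$ is needed to restore the pattern; this converts the weight $\e^{\im\vec{k}\cdot a}$ into a phase times $\e^{\im\vec{k}\cdot(a+e)}$ and replaces $\omega$ by a translated configuration, and after relabelling $e\mapsto-e$ in the nearest-neighbour sum it produces exactly the second term of $\wh{T}_{\vec{k}}$ (the sign of the phase $\e^{\mp\im\vec{k}\cdot e}$ being fixed by the orientation convention in $\mc{F}$). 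The Lindblad term is handled by the analogous substitution $a\mapsto a+\tfrac{\eta-\xi}{2}$; this is a genuine lattice vector precisely because item 1 of Assumption \ref{ass:r} forces $\xi+\eta\in2\Z^d$ whenever $r(\xi,\eta)$ or $r(0,\eta-\xi)$ is nonzero, and it generates both the phase $\e^{\im\vec{k}\cdot(\xi-\eta)/2}$ and the disorder shift $\tau_{(\eta-\xi)/2}$ appearing in $\wh{\mc{L}}_{\vec{k}}$.

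Collecting the three contributions with their prefactors $-\im u$, $-\im\lambda$ and $g$ yields \eqref{Lindblad in fibres}. I expect no genuine difficulty beyond careful bookkeeping: the two points that really need attention are (i) the parity constraint of Assumption \ref{ass:r}(1), which is what makes all the index shifts land back in $\Z^d$, and (ii) the convergence and differentiation-under-the-sum issues, which are absorbed into \eqref{eq:FGV}. Everything else is a matter of tracking how the phase $\e^{\im\vec{k}\cdot a}$ and the disorder argument $\tau_a\omega$ transform under each reindexing.
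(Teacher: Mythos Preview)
Your proposal is correct and follows essentially the same route as the paper: define the fibre transform, differentiate in $t$, substitute the three pieces of \eqref{eq:SErho}, and reduce each to an operator on $\wh{\mc{H}}$ by the index shifts $a\mapsto a+e$ (hopping) and $a\mapsto a+\tfrac{\eta-\xi}{2}$ (Lindblad), with the disorder term collapsing because $(\tau_a\omega)(\xi+a)-(\tau_a\omega)(a)=\omega(\xi)-\omega(0)$ is $a$-independent. Your explicit invocation of \eqref{eq:FGV} to justify differentiation under the sum, and of Assumption~\ref{ass:r}(1) to ensure $\tfrac{\eta-\xi}{2}\in\Z^d$, are points the paper leaves implicit but are exactly the right technical checks.
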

\begin{proof}
We define operators $T, V$ and $\mc{L}$ on $\mc{H}$ by
$$T \rho(X,\xi,\omega) \ := \ \sum_{|e|=1} \rho(X+e,\xi+e,\omega) - \rho(X+e,\xi - e,\omega)$$
$$
V \rho (X,\xi,\omega) \ := \ \left ( \omega\left ( \frac{X+\xi}{2} \right )-\omega\left ( \frac{X-\xi}{2} \right ) \right ) \rho (X,\xi,\omega)$$
and  
$$ \mc{L}\rho (X,\xi,\omega) \ : = \  \sum_{\eta\in \mathbb{Z}^d} \left [ r(\xi,\eta) - r(0,\eta-\xi) \right ] \rho(X,\eta,\omega)$$
so that the equation of motion for $\rho_{t}$ reads
$$\partial_t \rho_t = -\im u T \rho_t -\im \lambda V \rho_t + g \mc{L} \rho_t \ .$$
After Fourier transformation, this becomes
$$ \partial_t \wh{\rho}_{\vec{k};t} \ = \ -\im u \wh{T \rho}_{\vec{k};t} -\im \lambda \wh{V \rho}_{\vec{k};t} 
+ g \wh{\mc{L}\rho}_{\vec{k};t} \ .$$
Straightforward computations yield
\begin{align*}\wh{T \Psi}_{\vec{k}}(\xi,\omega)
\ &= \ \sum_{a} \e^{-\im \vec{k}\cdot a}  \sum_{|e|=1} \left [\Psi(\xi + 2a + e, \xi+e,\tau_a \omega) - \Psi(\xi+2a+e, \xi-e, \tau_a \omega)  \right ]\\
&= \ \sum_{|e|=1} \left [ \sum_{a} \e^{-\im \vec{k}\cdot a}  \Psi(\xi + 2a + e, \xi+e,\tau_a \omega) - 
\sum_{a} \e^{-\im \vec{k}\cdot a} \Psi(\xi-e+2(a+e), \xi-e, \tau_a \omega) \right ] \\
&= \
\sum_{|e|=1} \left [ \wh{\Psi}_{\vec{k}}(\xi+e,\omega) - \e^{\im \vec{k}\cdot e} \wh{\Psi}_{\vec{k}}(\xi-e,\tau_{-e}\omega) \right ] \\ &= \ \sum_{|e|=1} \left [ \wh{\Psi}_{\vec{k}}(\xi+e,\omega) - \e^{- \im \vec{k}\cdot e} \wh{\Psi}_{\vec{k}}(\xi+e,\tau_{e}\omega) \right ]
\end{align*}
and
\begin{align*}
\wh{V \Psi}_{\vec{k}} (\xi,\omega) \ &= \ \sum_{a} \e^{-\im \vec{k}\cdot a} \left ( \tau_a\omega(\xi +a ) - \tau_a\omega(a) \right ) 
\Psi(\xi + 2a, \xi, \tau_a \omega) \\
&=  \ \left (\omega(\xi) - \omega(0) \right ) \sum_{a} \e^{-\im \vec{k}\cdot a} \Psi(\xi + 2a, \xi, \tau_a \omega) \ = \ (\omega(\xi)-\omega(0)) \wh{\Psi}_{\vec{k}}(\xi,\omega) \ .
\end{align*}
Furthermore,
\begin{align*}
\wh{\mc{L} \Psi}_{\vec{k}} (\xi,\omega) \ &= \ \sum_{a} \e^{-\im \vec{k}\cdot a} \sum_{\eta} \left [ r(\xi,\eta) -r(0,\eta-\xi) \right ] \Psi(\xi +2a,\eta,\tau_a\omega) \\
&=  \ \sum_{\eta} \left [ r(\xi,\eta) -r(0,\eta-\xi) \right ]  \sum_{a}\e^{-\im \vec{k}\cdot a} \Psi(\xi +2a,\eta,\tau_a\omega) \\ &= \ \sum_{\eta}\e^{-\im \vec{k}\cdot \frac{\xi-\eta}{2}}  \left [ r(\xi,\eta) -r(0,\eta-\xi) \right ]  \wh{\Psi}_{\vec{k}}(\eta, \tau_{\frac{\eta-\xi}{2}} \omega) \ .\qedhere
\end{align*}
\end{proof}

We set 
\begin{equation}
\label{generator}
\mc{G}_k := \im \mc{A}_k - g \wh{\mc{L}}_{k} \ .
\end{equation} 
Lemma \ref{lem:main} has the following corollary.

\begin{lem}\label{lem:twoderivatives}
\begin{equation}\label{eq:twoderivatives}
\frac{1}{4} \sum_{X} X_1^2 \Ev{\rho_t(X,0,\omega)} \ = \ -\left . \partial_{k_1}^2 \ipc{\delta_0 \otimes 1}{ \e^{-t \mc{G}_{\vec{k}} } \delta_0 \otimes 1}_{\wh{\mc{H}}} \right |_{\vec{k}=0} \ .
\end{equation}
\end{lem}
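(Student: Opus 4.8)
The plan is to solve the fiber equation from Lemma \ref{lem:main}, identify the fiber initial data, and then read off the second moment of the position by differentiating a suitable fiber inner product twice in the quasi-momentum $\vec{k}$ at $\vec{k}=0$. First I would observe that, since $\mc{G}_{\vec{k}} = \im\mc{A}_{\vec{k}} - g\wh{\mc{L}}_{\vec{k}}$, Lemma \ref{lem:main} says $\partial_t\wh{\rho}_{\vec{k};t} = -\mc{G}_{\vec{k}}\wh{\rho}_{\vec{k};t}$, hence $\wh{\rho}_{\vec{k};t} = \e^{-t\mc{G}_{\vec{k}}}\wh{\rho}_{\vec{k};0}$. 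The initial fiber datum is computed directly from the definition of $\mc{F}$: since $\rho_0(X,\xi,\omega) = \delta_0(X)\delta_0(\xi)\1_\omega$, only the terms $a=0$ and $\xi=0$ survive in $\wh{\rho}_{\vec{k};0}(\xi,\omega) = \sum_a\e^{\im\vec{k}\cdot a}\rho_0(\xi+2a,\xi,\tau_a\omega)$, giving $\wh{\rho}_{\vec{k};0} = \delta_0\otimes 1$, independent of $\vec{k}$.

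Next I would compute the fiber inner product. Evaluating $\wh{\rho}_{\vec{k};t}$ at $\xi=0$ gives $\wh{\rho}_{\vec{k};t}(0,\omega) = \sum_a\e^{\im\vec{k}\cdot a}\rho_t(2a,0,\tau_a\omega)$, and pairing against $\delta_0\otimes 1$ in $\wh{\mc{H}} = L^2(\Z^d\times\Omega)$ collapses the $\xi$-sum to $\xi=0$ and leaves a disorder average over $\omega$. Using the translation invariance of that average, $\Ev{\rho_t(2a,0,\tau_a\omega)} = \Ev{\rho_t(2a,0,\omega)}$, one obtains
$$\ipc{\delta_0\otimes 1}{\e^{-t\mc{G}_{\vec{k}}}(\delta_0\otimes 1)}_{\wh{\mc{H}}} \ = \ \sum_{a\in\Z^d}\e^{\im\vec{k}\cdot a}\Ev{\rho_t(2a,0,\omega)}\ .$$
Applying $-\partial_{k_1}^2$ and setting $\vec{k}=0$ brings down a factor $-(\im a_1)^2 = a_1^2$; writing $X=2a$, so that $a_1^2 = X_1^2/4$, and recalling that $\rho_t(X,0,\omega)=0$ unless $X\in 2\Z^d$, yields the claimed identity
$$-\left.\partial_{k_1}^2\ipc{\delta_0\otimes 1}{\e^{-t\mc{G}_{\vec{k}}}(\delta_0\otimes 1)}_{\wh{\mc{H}}}\right|_{\vec{k}=0} \ = \ \sum_{X\in 2\Z^d}\frac{X_1^2}{4}\Ev{\rho_t(X,0,\omega)} \ = \ \frac14\sum_X X_1^2\Ev{\rho_t(X,0,\omega)}\ .$$

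The main obstacle is justifying the term-by-term differentiation under the infinite sum over $a$ (and, likewise, the interchange of $\Ev{\cdot}$ with that sum). The finite-group-velocity bound \eqref{eq:FGV} is exactly what is needed: combining $\sum_{X,\xi}\e^{m|X|}\abs{\rho_t(X,\xi,\omega)}^2 \le \e^{C_m t}$ with Cauchy--Schwarz against the convergent weight $\sum_X|X|^4\e^{-m|X|}$ gives $\sum_X|X|^2\Ev{\abs{\rho_t(X,0,\omega)}} < \infty$, uniformly for $\vec{k}$ in a neighborhood of $0$, so that the series and its first two $\vec{k}$-derivatives converge absolutely and uniformly and the interchange is legitimate. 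All remaining steps are bookkeeping with the Fourier-transform conventions fixed above.
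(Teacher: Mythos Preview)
Your proof is correct and follows essentially the same route as the paper's: identify $\wh{\rho}_{\vec{k};0}=\delta_0\otimes 1$, solve the fiber equation via Lemma~\ref{lem:main} to get $\wh{\rho}_{\vec{k};t}=\e^{-t\mc{G}_{\vec{k}}}\delta_0\otimes 1$, recognize $\ipc{\delta_0\otimes 1}{\wh{\rho}_{\vec{k};t}}_{\wh{\mc{H}}}=\Ev{\wh{\rho}_{\vec{k};t}(0,\omega)}$ as the disorder-averaged Fourier sum $\sum_a\e^{\im\vec{k}\cdot a}\Ev{\rho_t(2a,0,\omega)}$ via translation invariance, and differentiate twice, with the interchange justified by the finite-group-velocity bound~\eqref{eq:FGV}. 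Your Cauchy--Schwarz step making the use of~\eqref{eq:FGV} explicit is a welcome addition of detail, but otherwise the argument is the same as the paper's.
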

\begin{rem*}Here
$$\delta_0\otimes 1(x,\omega) \ = \ \begin{cases} 1 & \text{ if } x=0 \ , \\
 0 & \text{ otherwise.}
\end{cases}$$
\end{rem*}
\begin{proof} 
First note that, by eq.\ \eqref{eq:FGV}, we are justified in interchanging differentiation and summation to write
$$\sum_{X} X_1^2 \Ev{\rho_t(X,0,\omega)}  \ = \ - \left . \partial_{k_1}^2 \sum_{X} \e^{-\im \vec{k} \cdot X} \Ev{\rho_t(X,0,\omega)} \right |_{\vec{k}=0} \ .$$
By shift invariance of the distribution of $\omega$, it  follows that
$$
\frac{1}{4} \sum_{X} X_1^2 \Ev{\rho_t(X,0,\omega)} \ = \ 
- \left . \partial_{k_1}^2 \sum_{x} \e^{-\im \vec{k}\cdot x} \Ev{\rho_t(2x,0,\tau_x \omega)} \right |_{\vec{k}=0} \
= - \left . \partial_{k_1}^2 \Ev{\wh{\rho}_{\vec{k},t}(0,\omega)} \right |_{k=0} \ .
$$
Since $\rho_0(X,\xi,\omega) = \delta_{0}(X)\delta_{0}(\xi)$, we have that $\wh{\rho}_{\vec{k},0}(\xi,\omega) = \delta_0(\xi)$.  Lemma \ref{lem:main} then implies that $\wh{\rho}_{\vec{k},t} \ = \ \e^{-t \mc{G}_k} \delta_0 \otimes 1$. This completes the proof.
\end{proof}
To compute the right hand side of eq.\ \eqref{eq:twoderivatives}, we begin by noting that
\begin{equation}
\begin{aligned} 
-&\partial_{k_1}^2   \left .  \ipc{\delta_0 \otimes 1}{ \e^{-t \mc{G}_{\vec{k}}} \delta_0 \otimes 1}_{\wh{\mc{H}}} \right |_{\vec{k}=0} &
\\ & = \ - 2 \int_0^t\di s \int_0^s  \di r \ipc{\delta_0\otimes 1}{ \e^{-(t-s)\mc{G}_{\vec{0}}} \left . \partial_{k_1} \mc{G}_{\vec{k}} \right |_{{\vec{k}}=0} \e^{- (s-r)\mc{G}_{\vec{0}}} \left . \partial_{k_1} \mc{G}_{\vec{k}} \right |_{{\vec{k}}=0} \e^{- r\mc{G}_{\vec{0}}} \delta_0\otimes 1 } \\
& \qquad +  \int_0^t \di s \ipc{\delta_0\otimes 1}{ \e^{-(t-s)\mc{G}_{\vec{0}}} \left . \partial_{k_1}^2 \mc{G}_{\vec{k}} \right |_{k=0} \e^{- s\mc{G}_{\vec{0}}} \delta_0\otimes 1} \\
& = \  - 2\int_0^t\di s \int_0^s  \di r \ipc{ \left . \partial_{k_1} \mc{G}_{\vec{k}}^\dagger \right |_{{\vec{k}}=0} \delta_0\otimes 1}{  \e^{- (s-r)\mc{G}_{\vec{0}}} \left . \partial_{k_1} \mc{G}_{\vec{k}} \right |_{{\vec{k}}=0} \delta_0\otimes 1 } \\
& \qquad +  \int_0^t \di s \ipc{\delta_0\otimes 1}{ \left . \partial_{k_1}^2 \mc{G}_{\vec{k}} \right |_{{\vec{k}}=0} \delta_0\otimes 1}
\end{aligned}\label{eq:acalc}
\end{equation}
where we have used that
\begin{equation} \e^{-t \mc{G}_{\vec{0}} } \delta_0 \otimes 1 \ = \ \e^{-t\mc{G}_{\vec{0}}^\dagger} \delta_0\otimes 1 \ = \ \delta_0\otimes 1\label{eq:k=0ev} \ .	
\end{equation}
Eq.\ \eqref{eq:k=0ev} follows from the following facts
\begin{enumerate}
\item $\mc{A}_{\vec{0}} \delta_0 \otimes 1 \ = \ 0$,
\item $\wh{\mc{L}}_{\vec{k}} \delta_0 \otimes 1 \ = \ 0$, \text{    }and
\item $ \wh{\mc{L}}_{\vec{k}}^\dagger \delta_0\otimes 1\  = \ 0$,
\end{enumerate}
which may be verified by explicit computation.  (In fact, (1) and (3) are quite general \tem \ they follow from the conservation of quantum probabilities.  However, (2) requires Item 1 of Assumption \ref{ass:r}.)

Since $\wh{\mc{L}}_{\vec{k}}\delta_0\otimes 1=\wh{\mc{L}}^{\dagger}_{\vec{k}}\delta_0\otimes 1 = 0$, for all 
$\vec{k}$, we find that
$$ \partial_{k_1} \mc{G}_{\vec{k}} \delta_0\otimes 1 \ = \ \im u \partial_{k_1} T_{\vec{k}} \delta_0\otimes 1 \ = \   u \left [ \e^{\im k_1} \delta_{-e_1}\otimes 1 - \e^{-\im k_1}\delta_{e_1}\otimes 1 \right ]   	
$$
$$ \partial_{k_1} \mc{G}_{\vec{k}}^\dagger  \delta_0\otimes 1 \ = \ - \im u \partial_{k_1} T_{\vec{k}} \delta_0\otimes 1  \ = \ -  u \left [ \e^{\im k_1} \delta_{-e_1}\otimes 1 - \e^{-\im k_1}\delta_{e_1}\otimes 1 \right ]$$
and 
$$\partial_{k_1}^2 \mc{G}_{\vec{k}} \delta_0\otimes 1 \  = \ \im u
\left [ \e^{\im k_1} \delta_{-e_1}\otimes 1 + \e^{-\im k_1}\delta_{e_1}\otimes 1 \right ]$$
where $e_1$ is the first basis vector.  Using these computations in eq.\ \eqref{eq:acalc}, we conclude that
$$ -\partial_{k_1}^2   \left .  \ipc{\delta_0 \otimes 1}{ \e^{-t \mc{G}_{\vec{k}}} \delta_0 \otimes 1}_{\wh{\mc{H}}} \right |_{{\vec{k}}=0} \ = \ 2 u^2 \int_0^t\di s \int_0^s  \di r \ipc{\phi}{   \e^{- (s-r)\mc{G}_0}  \phi } $$
where 
$$ \phi  \ =  \  (\delta_{e_1}-\delta_{-e_1})\otimes 1 \ .$$

Using Lemma \ref{lem:twoderivatives}, it now follows from the Tauberian theorems formulated in Ref.\ \onlinecite[Chapter XIII]{FellerII}, that
\begin{align*} \lim_{t\rightarrow \infty} \frac{1}{4t} \sum_{X} X_1^2 \Ev{\rho_t(X,0,\omega)} \ =& \ \lim_{t\rightarrow \infty} 
2 u^2 \frac{1}{t} \int_0^t\di s \int_0^s  \di r \ipc{\phi}{   \e^{- r \mc{G}_0}  \phi }\\
=& \ \lim_{\eta \rightarrow 0} 2 u^2 \eta \int_0^\infty \di t \, \e^{-t\eta} \int_0^t \di r\ipc{\phi}{   \e^{- r \mc{G}_0}  \phi } \\
=&  \ \lim_{\eta \rightarrow 0} 2 u^2 \int_0^\infty \di r \, \e^{-r \eta}
\ipc{\phi}{   \e^{- r \mc{G}_0}  \phi } \
= \ \lim_{\eta \rightarrow 0} 2u^2
\ipc{\phi}{   \frac{1}{\eta + \mc{G}_0}   \phi } \, 
\end{align*}
provided the limit on the right side exists. 

This limit can be shown to exist  using a straightforward ``Feshbach  argument.''  Recall that $\mc{G}_{\vec{0}}\delta_0 \otimes 1 = \mc{G}_{\vec{0}}^\dagger \delta_0 \otimes 1 = 0$.  It follows that $$\wh{\mc{H}}^\perp \ = \ \setb{f\in L^2(\Z^d\times \Omega)}{ \ipc{\delta_0\otimes 1}{f} =0}$$
is invariant under $\mc{G}_0$.  Since $\phi \in \wh{\mc{H}}^\perp$, we need only consider the restriction of $\mc{G}_0$ to $\wh{\mc{H}}^\perp$ to compute the resolvent matrix element of the resolvent  we wish to control. We further decompose the subspace $\wh{\mc{H}}^\perp$ as follows:
$$\wh{\mc{H}}^\perp \ = \ \wh{\mc{H}}^\perp_0 \oplus \wh{\mc{H}}^\perp_1$$
where
$$ \wh{\mc{H}}^\perp_0 \ = \ \setb{ f\in \wh{\mc{H}}^\perp }{f(x,\omega)=0 \text{ if $x\neq 0$.}} \ = \ \setb{\delta_0\otimes f}{\Ev{f(\omega)}=0}$$
and
\begin{equation}\label{subspace}
\wh{\mc{H}}^\perp_1 \ = \ \setb{f(x,\omega) \in \wh{\mc{H}} }{f(0,\omega) = 0} \ .
\end{equation} 
Let $Q_0$ denote the orthogonal projection of $\wh{\mc{H}}^\perp$ onto $\wh{\mc{H}}^\perp_0$ and set $Q_1:=1-Q_0$. We have the following block-matrix form for the action of $\mc{G}_0$ on  $\wh{\mc{H}}^\perp$:
$$\left . \mc{G}_0 \right |_{ \wh{\mc{H}}^\perp} \ \simeq \ \begin{pmatrix}
 0 & \im u Q_0 \wh{T}_0 Q_1 \\
 \im u Q_1 \wh{T}_0 Q_0 & Q_1 \mc{G}_0 Q_1
 \end{pmatrix} \ .$$
Since $\phi \in \widehat{\mc{H}}_{1}^{\perp}$, the Schur Complement formula yields
$$ \ipc{\phi}{   \frac{1}{\eta + \mc{G}_0}  \phi} \ = \ 
\ipc{\phi}{   \frac{1}{\eta + Q_1 \mc{G}_0 Q_1 + \frac{u^2}{\eta} Q_1 \wh{T}_0 Q_0\wh{T}_0 Q_1 }   \phi } \ .$$
Now
$$ \Re Q_1 \mc{G}_0 Q_1 \ = \ g  Q_1 \wh{\mc{L}} Q_1 \ \ge \ c g Q_1$$
where $c$ is the constant appearing in eq. \eqref{eq:exp-bound}. Thus

$$\norm{ \frac{1}{\eta + Q_1 \mc{G}_0 Q_1 + \frac{u^2}{\eta} Q_1 \wh{T}_0 Q_0\wh{T}_0 Q_1 } Q_1 } \ \le
 \ \frac{1}{c g} < \infty$$
uniformly in $\eta$.  

It follows that $\ipc{\phi}{   (\eta + \mc{G}_0 )^{-1} \phi}$ remains bounded as $\eta\rightarrow 0$; and it  remains to show that it has a non-zero limit.  To this end, let 
\begin{align*} \Pi \ : =& \ \text{ orthogonal projection of $\wh{\mc{H}}^\perp$ onto kernel of } Q_1 \wh{T}_0 Q_0\wh{T}_0 Q_1 \\ 
 =& \ \text{orthogonal projection  of $\wh{\mc{H}}^\perp$ onto kernel of }  Q_0\wh{T}_0 Q_1 \ .
 \end{align*}
Then we have (see Lemma \ref{lem:limres}, below)
$$\lim_{\eta \rightarrow 0} \frac{1}{\eta + Q_1 \mc{G}_0 Q_1 + \frac{u^2}{\eta} Q_1 \wh{T}_0 Q_0\wh{T}_0 Q_1} \ = \ \Pi \frac{1}{\Pi \mc{G}_0  \Pi} \Pi ,$$
where the limit is in the weak operator topology  and the operator $\Pi \mathcal{G}_{0} \Pi$ is boundedly-invertible on the range of $\Pi$, because
$$\Re \Pi \mc{G}_0 \Pi \ = \ 
\Pi \wh{\mc{L}} \Pi \ \ge  \ c g \Pi \ . $$ 

By an explicit computation, we see that given $f(x,\omega) \in \wh{\mc{H}}^\perp$,
$$  \left [ Q_0\wh{T}_0 Q_1 f\right ] (0,\omega) \ = \ \sum_{|e|=1} \left( f(e,\omega)-f(e,\tau_e\omega)\right) \ .$$
It follows that
$$  Q_0\wh{T}_0 Q_1 \left ( \delta_{e_1} - \delta_{-e_1} \right ) \otimes 1 = 0 \ .$$
Hence \text{   } $\phi = \left ( \delta_{e_1} - \delta_{-e_1} \right ) \otimes 1 \in \ran \Pi$, and
$$ \lim_{\eta \rightarrow 0} \ipc{\phi}{   \frac{1}{\eta + \mc{G}_0}  \phi} \ = \ \ipc{\phi}{\frac{1}{\Pi \mc{G}_0\Pi} \phi}\ .$$

Thus,
$$ D \ := \ \lim_{t\rightarrow \infty} \frac{1}{t} \sum_{x} x_1^2 \Ev{\dirac{x}{\rho_t}{x}} \ = \ 2 u^2 \ipc{\phi}{   \frac{1}{\Pi \mc{G}_0 \Pi }   \phi } \ .$$
Note that $D \le \frac{2u^2}{c g}$, and $ D \ge 0$, since it is a limit of positive quantities. Moreover, 
$$ D \ = \ \Re{D}  \ = \ 2 g u^2 \text{Re} \ipc{\frac{1}{\Pi \mc{G}_0 \Pi } \phi}{ \wh{\mc{L}}  \frac{1}{\Pi \mc{G}_0 \Pi }   \phi }
\ \ge \ 2 g u^2  c \norm{\frac{1}{\Pi \mc{G}_0 \Pi }   \phi }^2 \ \ge \ \frac{4 g u^2 c}{\norm{\mc{G}_0}^2} $$
since $\norm{\phi}^2=2$. This completes the proof of Theorem \ref{thm:main}.

\section{Diffusion without disorder \tem \ Theorem \ref{thm:dissbal}} In the last section, we have not made use of the disorder in our estimates.  Indeed, the result also holds when $\lambda =0$.  In fact, for $\lambda =0$, the calculations become much simpler, because we have that $\mathcal{A}_{0} = \widehat{T}_0 =0$, and hence
$$D \ = \ 2 u^2  \ipc{\phi}{\frac{1}{g \wh{\mc{L}}} \phi}\ . $$
By eq. \eqref{eq:exp-bound}, $ \langle \phi,\frac{1}{\mathcal{\widehat{L}}} \phi \ \rangle$ is bounded above by $\nicefrac{2}{c}$. Theorem \ref{thm:dissbal} follows.

\section{Perturbation Theory for $D$ in the large disorder regime \tem \ Theorem \ref{thm:dissloc}.}
Localization in the form of Eq. \eqref{eq:localization} implies that
$$D(0) \ = \  \lim_{t \rightarrow \infty } \frac{1}{t} \sum_{x}\abs{x_1}^2  \Ev{\abs{\dirac{x}{\e^{-\im H_\omega t} }{0}}^2} \ = \ 0\ .$$\\
In order to compare the diffusion constant at $g=0$ with the one at $g>0$, it is convenient to introduce a cut-off version of $D(g)$, namely
\begin{equation}\label{eq:Deta}
D(g,\eta) \ := \  \eta^2  \int_0^\infty \di t \, \e^{-\eta t} \sum_{x} \abs{x_1}^2 \Ev{\dirac{x}{\rho_t}{x}} 
\end{equation}
where $\rho_t$ satisfies eq. \eqref{eq:Lindblad}. 
The quantity $D(g,\eta)$ can be thought of as the diffusion constant on a time scale of $\nicefrac{1}{\eta}$. 
By Lemma \ref{lem:twoderivatives},
$$ D(g,\eta) \ := \   \left . \partial_{k_1}^2  \ipc{\delta_0\otimes 1}{\frac{\eta^2}{\mc{A}_k^{(0)} + g \wh{\mc{L}} + \eta} \delta_0 \otimes 1} \right |_{k=0}\ .$$
Following the calculations of the previous section, and using that
$$ u \phi \ = \ \partial_{k_1} \im \mc{A}_k |_{k=0}\delta_0 \otimes 1
\quad \text{and} \quad  \mc{A}_0\delta_0\otimes 1 \ =\ 0,$$ we see that 
\begin{equation} D(g,\eta) \ = \ 
2 u^2  \ipc{\phi}{\frac{1}{\im \mc{A}_0 + g \wh{\mc{L}}+ \eta} \phi } \label{eq:Deta2}
\end{equation}
and
$$ D(g) \ =  \ \lim_{\eta\rightarrow 0} D(g;\eta) \ .$$

For $g=0$, 
$$D(0,\eta) \ = \ \eta^2 \int_0^\infty \e^{-\eta t} \left [ \sum_{x} |x_1|^2  \Ev{\abs{\dirac{x}{\e^{-\im H_\omega t} }{0}}^2} \right ] \di t,$$
hence
$$ D(0,\eta) \ \le \ \eta \ell^2$$
with $\ell$ is as in eq.\ \eqref{eq:localization}.
As $D(0,\eta)$ is real and $\mc{A}_0$ is Hermitian, eq.\ \eqref{eq:Deta2} implies that
$$D(0,\eta) \ = \ \Re D(0,\eta) \ = \  2\eta u^2 \norm{\frac{1}{\im \mc{A}_0 + \eta} \phi}^2 \  .$$
Thus 
$$ \norm{\frac{1}{\im \mc{A}_0 + \eta} \phi}^2  \ = \  \frac{\eta}{ 2 u^2} \int_0^\infty \e^{-\eta t} \left [ \sum_{x} |x|^2  \Ev{\abs{\dirac{x}{\e^{-\im H_\omega t} }{0}}^2} \right ] \di t$$
and so
$$ \norm{\frac{1}{\im\mc{A}_0 + \eta} \phi}^2 \ \le \ \frac{\ell^2}{ 2u^2 }\ .$$
Because $\mc{A}_0$ is a Hermitian operator, 
$$ \norm{\frac{1}{\im\mc{A}_0 + \eta} \phi}^2 \ = \ \int_{\R} \abs{ \frac{1}{i t + \eta}}^2 \di \mu_\phi(t)$$
where $\mu_\phi$ is the spectral measure of $\mc{A}_0$ associated to $\phi$. By monotone convergence,
$$ \int_{\R} \frac{1}{t^2} \di \mu_\phi(t) \ = \ \lim_{\eta \rightarrow 0} \norm{\frac{1}{\im\mc{A}_0 + \eta} \phi}^2 \ \le  \ \frac{\ell^2}{2u^2 d} \ .$$
Thus, \text{   }$\psi  =  \lim_{\eta\rightarrow 0} \frac{1}{\im\mc{A}_0 + \eta} \phi$ \text{    }exists in $\wh{\mc{H}}$, and   
$$ \norm{\psi }^2 \ \le \ \frac{\ell^2}{2u^2}$$
Note that 
$$\overline{\psi} \ = \ \lim_{\eta\rightarrow 0} \frac{1}{-\im \mc{A}_0+ \eta} \phi$$
where $\overline{\psi}$ is the complex conjugate of $\psi$.  (This is true, because $\phi$ is a real function and 
$\mc{A}_0$ is a \textit{real} symmetric operator.)

Translated back to the Schr\"odinger operator picture, we have proven the existence of the limit
$$\ell_0^2 \ := \ \lim_{\eta \rightarrow 0} \eta \int_0^\infty \e^{-\eta t} \left [ \sum_{x} \frac{|x|^2}{d} \Ev{ \abs{\dirac{x}{\e^{-\im t H_\omega}}{0}}^2} \right ] \di t$$
and obviously $\ell_0 \le \ell$.  Note that $ \norm{\psi }^2 \ = \ \norm{\overline{\psi}}^2 \ = \  \frac{\ell_0^2}{2u^2 d}.$

Returning to $g>0$, we have that
\begin{multline*} 
D(g,\eta) \ = \ D(0,\eta) - 2u^2  g \ipc{\phi}{\frac{1}{\im\mc{A}_0 + \eta}\wh{\mc{L}} \frac{1}{\im\mc{A}_0 + \eta} \phi} \\ + 2u^2  g^2 \ipc{\phi}{\frac{1}{\im\mc{A}_0 + \eta}\wh{\mc{L}} \frac{1}{\im\mc{A}_0  + g\wh{\mc{L}} + \eta}\wh{\mc{L}} \frac{1}{\im\mc{A}_0 + \eta} \phi} \ .
\end{multline*}
Since $\wh{\mc{L}} = Q_1 \wh{\mc{L}} Q_1$, where $Q_1$ is the orthogonal projection onto the subspace 
$\widehat{\mathcal{H}}_{1}^{\perp}$ introduced in \eqref{subspace}, we have that
$$
\wh{\mc{L}} \frac{1}{\im\mc{A}_0  + g\wh{\mc{L}} + \eta}\wh{\mc{L}}
\ = \ \wh{\mc{L}}\frac{1}{\im Q_1\mc{A}_0Q_1 + g \wh{\mc{L}} + \eta + \frac{1}{\eta} Q_1\wh{T}_0 Q_0 \wh{T}_0 Q_1} \wh{\mc{L}} \ .
$$
Taking the limit $\eta \rightarrow 0$, as in the proof of Theorem \ref{thm:main}, we find that
$$D(g) \ = \ - 2u^2 g \ipc{\overline{\psi}}{\left [ \wh{\mc{L}} - g  \wh{\mc{L}} \Pi \frac{1}{\im \Pi\mc{A}_0 \Pi + g \Pi\wh{\mc{L}}\Pi} \Pi \wh{\mc{L}}  \right ] \psi} $$
where $\Pi$ is the projection onto the kernel of $Q_0\hat{T}_0 Q_1$, as above. In particular, we obtain the upper bound
$$\abs{D(g)} \ \le \ 2 u^2 g \left ( 1 + \frac{1}{c} \right ) \norm{\psi}^2 
\ = \ \left ( 1 + \frac{1}{c} \right ) \ell_0^2 g$$
where we have used that $\| \wh{\mc{L}}\|\leq1$ (by convention).
Furthermore, we have the lower bound
$$D(g)  \ \ge \  \frac{4u^{2}gc}
{\Vert \mathcal{G}_{0}\Vert ^{2}} \ = \ \frac{4u^2 gc}{\Vert\mathcal{A}_0 + g \wh{\mc{L}}\Vert^2}$$
obtained at the end of the proof of Theorem \ref{thm:main}.
We conclude that
\begin{equation}
\label{dep. on g}
D(g) \ = \ \mc{O}(g), \qquad \text{as} \text{     } g\rightarrow 0 \ . 
\end{equation}
In fact, we can take the limit 
\begin{equation}\label{eq:limit} \Delta  \ := \ \lim_{g \downarrow 0} \frac{D(g)}{g} \ .
\end{equation}
Let $\Pi_0$ denote the projection onto the kernel of $\Pi \mc{A}_0 \Pi$.  If $\Pi_0=0$ then 
$$ \frac{g}{\im \Pi \mc{A}_0  \Pi + g \Pi \wh{\mc{L}}\Pi }  \ \xrightarrow[]{g\rightarrow 0} \ 0 $$
in the weak operator topology; see Lemma \ref{lem:limres}.  It seems likely that $\Pi_0=0$, but we are not aware of a proof.  Furthermore we do not need to resolve this issue, since if $\Pi_0 \neq 0$ then
$$ \frac{g}{\im \Pi \mc{A}_0  \Pi + g \Pi \wh{\mc{L}}\Pi }  \ \xrightarrow[]{g\rightarrow 0} \  \Pi_0 \frac{1}{\Pi_0 \wh{\mc{L}} \Pi_0} \Pi_0 $$
in the weak operator topology; see Lemma \ref{lem:limres}. In any case, the limit in eq.\ \eqref{eq:limit} exists, and
$$ \Delta  \ = \  - 2 u^2  \ipc{\overline{\psi}}{\left [ \wh{\mc{L}} -   \wh{\mc{L}} \Pi_0 \frac{1}{\Pi_0 \wh{\mc{L}}\Pi_0} \Pi_0 \wh{\mc{L}}  \right ] \psi}
$$
where  we take $\Pi_0  (\Pi_0 \wh{\mc{L}}\Pi_0)^{-1} \Pi_0 =0$ \text{  } if $\Pi_0=0$.  Note that we have the estimates
$$ \frac{4 c}{\|\mc{A}_0\|^2} u^2 \ \le \  \Delta  \ \le \  \left(1+\frac{1}{c}\right)\ell_0^2 \ .$$\\
\begin{acknowledgements}
	We thank the School of Mathematics of the Institute for Advanced Study and, in particular, Thomas C. Spencer for generous hospitality during a period when we started to collaborate on the problems solved in this paper and generated most of the ideas used in our analysis. The stay of J.F. at the Institute for Advanced Study was supported by `The Fund for Math' and `The Robert and Luisa Fernholz Visiting Professorship Fund.'  J.S. was supported by United States NSF grants DMS-0844632 and DMS-1500386, and his stay at the Institute for Advanced Study was supported by `The Fund for Math.' This manuscript was completed during a stay of J.S. at the Isaac Newton Institute for the program `Periodic and Ergodic Spectral Problems.'
\end{acknowledgements}

\appendix
\section{Finite group velocity estimates for the Lindblad equation}
\begin{lem}\label{lem:FGVE}
Let $\rho_0$ be a given density matrix such that $$A \ = \ \sup_{X}\e^{m|X|}  \left ( \sum_{\xi} \abs{\rho_0(X,\xi)}^2 \right )^{\frac{1}{2}} \ < \ \infty \ .$$ If, for $t>0$, the density matrices $\rho_t$ satisfy eq.\ \eqref{eq:Lindblad}, with initial condition $\rho_0$, then 
$$ \sup_{X} \e^{m|X|} \left (  \sum_{\xi} \abs{\rho_t(X,\xi)}^2 \right)^{\frac{1}{2}} \ \le \ \e^{C_m  t} A \ , $$
with $C_m = 4d \e^m u + g $.
\end{lem}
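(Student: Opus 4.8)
The plan is to establish a differential inequality for the fiber $\ell^2$-norm and then close it with Grönwall's inequality. For each site $X$ write $\Phi_t(X)^2 = \sum_{\xi}|\rho_t(X,\xi)|^2$ for the squared norm of $\rho_t$ restricted to the fiber over $X$, and consider the weighted quantity $w_t(X) = e^{2m|X|}\Phi_t(X)^2$; the asserted bound is equivalent to $\sup_X w_t(X) \le e^{2C_m t}\sup_X w_0(X)$. First I would differentiate in $t$, using $\partial_t\Phi_t(X)^2 = 2\operatorname{Re}\sum_\xi \overline{\rho_t(X,\xi)}\,\partial_t\rho_t(X,\xi)$ and inserting the three terms on the right-hand side of \eqref{eq:SErho}.

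Then I would bound each term separately. The disorder term $-\im\lambda[\omega(\tfrac{X+\xi}2)-\omega(\tfrac{X-\xi}2)]\rho_t(X,\xi)$ carries a purely imaginary prefactor and acts diagonally in $\xi$, so it contributes $\operatorname{Re}(\im\times\text{real}\times|\rho|^2)=0$ and drops out entirely --- importantly, this holds whether or not $\omega$ is bounded. The dissipative term is controlled by the normalization \eqref{eq:normalization}: since $\|\mathcal L\rho(X,\cdot)\|_{\ell^2_\xi}\le \Phi_t(X)$, the Cauchy--Schwarz inequality gives a contribution $\le 2g\,\Phi_t(X)^2$ (no use of the spectral-gap bound \eqref{eq:exp-bound} is needed). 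The hopping term couples the fiber over $X$ to those over $X\pm e$; here I would use Cauchy--Schwarz together with the fact that translation in $\xi$ preserves the fiber norm, i.e. $\|\rho_t(X+e,\cdot\pm e)\|_{\ell^2_\xi}=\Phi_t(X+e)$, to obtain a bound $\le 4u\sum_{|e|=1}\Phi_t(X)\Phi_t(X+e)$. Collecting terms yields $\partial_t\Phi_t(X)^2 \le 4u\sum_{|e|=1}\Phi_t(X)\Phi_t(X+e) + 2g\,\Phi_t(X)^2$.

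Next I would reintroduce the weight. Multiplying by $e^{2m|X|}$ and using $\big||X|-|X+e|\big|\le|e|=1$, so that $e^{m|X|}\Phi_t(X+e)\le e^m w_t(X+e)^{1/2}$, followed by the elementary inequality $w(X)^{1/2}w(X+e)^{1/2}\le\tfrac12(w(X)+w(X+e))$, gives $\partial_t w_t(X)\le (8d\,e^m u + 2g)\sup_Y w_t(Y)$. Writing $N_t=\sup_X w_t(X)$ and integrating the pointwise inequality \emph{before} taking the supremum (thereby avoiding differentiation of a supremum), one obtains $N_t\le N_0 + 2C_m\int_0^t N_s\,ds$, whence $N_t\le e^{2C_m t}N_0$ by the integral form of Grönwall's inequality; taking square roots recovers the stated bound with $C_m=4d\,e^m u + g$.

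The main obstacle is purely technical: to run Grönwall I must know a priori that $N_t$ is finite (the weight $e^{2m|X|}$ is unbounded) and I must justify differentiating under the sum over $\xi$. Both are handled by truncation --- replace $e^{2m|X|}$ by the bounded weight $e^{2m\min(|X|,R)}$, for which the crucial Lipschitz ratio $e^{m\min(|X|,R)}/e^{m\min(|X+e|,R)}\le e^m$ still holds, so the differential inequality, and hence the constant $C_m$, is independent of $R$. For fixed $R$ the truncated supremum is finite because $\rho_t$ is trace class, hence Hilbert--Schmidt with $\|\rho_t\|_{\mathcal H}\le 1$, so the Grönwall estimate applies; letting $R\to\infty$ and invoking monotone convergence promotes the bound to the full weight. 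The one step that must be carried out with care --- and that pins down the precise value of $C_m$ --- is the interplay of the weight ratio $e^m$ with the $2d$-fold nearest-neighbor sum and the factor-of-two bookkeeping between the squared and unsquared norms.
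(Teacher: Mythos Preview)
Your argument is correct and lands on the same constant $C_m$, but the packaging differs from the paper's. The paper first passes to the interaction picture $W_t(X,\xi)=e^{it\phi(X,\xi)}\rho_t(X,\xi)$, with $\phi(X,\xi)=\lambda[\omega(\tfrac{X+\xi}{2})-\omega(\tfrac{X-\xi}{2})]$, which removes the disorder term outright; it then works in the Banach space $\mathcal B_m$ with norm $\|F\|_m=\sup_X e^{m|X|}(\sum_\xi|F(X,\xi)|^2)^{1/2}$ and simply bounds the conjugated generator $\mathcal G_t=e^{it\phi}(iuT+g\mathcal L)e^{-it\phi}$ in operator norm, obtaining $\|\mathcal G_t\|_{\mathcal B_m\to\mathcal B_m}\le 4de^mu+g$ uniformly in $t$, from which the estimate follows by standard ODE theory in $\mathcal B_m$. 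Your route replaces the interaction picture by the equivalent observation that the disorder term vanishes in $\partial_t|\rho_t(X,\xi)|^2$, and replaces the operator-norm bound by a fiberwise differential inequality closed with Gr\"onwall plus a weight truncation. The core estimates on the hopping and Lindblad pieces are literally the same; the paper's version is a bit slicker because the Banach-space framework absorbs the Gr\"onwall and truncation bookkeeping, while yours is more hands-on and makes the a~priori finiteness explicit. One small refinement: truncating the weight in $|X|$ handles finiteness of $N_t$ but not the interchange of $\partial_t$ with $\sum_\xi$; that step is cleanest if you compute $\partial_t|\rho_t(X,\xi)|^2$ pointwise in $(X,\xi)$ first (where the disorder contribution already vanishes termwise) and only then sum over $\xi$.
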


\begin{proof} Let $\mc{B}_m$ denote the Banach space of functions $F:\Z^d \times \Z^d \rightarrow \bb{C}$ such that
$$\norm{F}_m \ := \ \sup_{X} \left (  \sum_{\xi} \e^{m|X|} \abs{F(X,\xi)}^2 \right )^{\frac{1}{2}} \ < \ \infty\ .  $$
Thus $\rho_0\in \mc{B}_m$ by assumption.  Let $\phi(X,\xi) = \im \lambda \left ( \omega\left ( \frac{X+\xi}{2}\right ) - \omega\left ( \frac{X-\xi}{2}\right ) \right )$ and let 
$$W_t(X,\xi) \ = \ \e^{\im t \phi(X,\xi)} \rho_t(X,\xi)\ .$$ 
Note that $\abs{W_t(X,\xi)}=\abs{\rho_t(X,\xi)}$. In particular, $W_0 \in \mc{B}_m$. 

The evolution of $W_t$ is governed by the non-autonomous equation
$$\partial_t W_t \ = \  \mc{G}_t W_t\ ,$$
with time-dependent generator
$$ \mc{G}_t \ = \ \e^{\im t \phi} (\im u T + g \mc{L} ) \e^{-\im t \phi}\ ,$$
where the ``kinetic hopping operator" $T$ is
$$ T\rho(X,\xi) \ = \ \sum_{|e|=1}\left [ \rho(X+e,\xi+e)-\rho(X+e,\xi-e)\right ]\ .$$
Because $\phi$ and $\mc{L}$  fiber over $X$, one easily computes that
\begin{multline*}\norm{\e^{\im t \phi} \mc{L} \e^{-\im t \phi} F}_m \\ = \ \sup_{X} \e^{m|X|} \left ( \sum_{\xi} \abs{\sum_{\eta} \left ( r(\xi,\eta) - r(0,\eta-\xi) \right ) \e^{-\im t\phi(X,\eta)} F(X,\eta)}^2 \right )^{\frac{1}{2}}
\ \le \	 \norm{F}_m \ ,
\end{multline*}
by the normalization eq.\ \eqref{eq:normalization}.
On the other hand, the hopping term $\e^{\im t\phi} T\e^{-\im t \phi}$ is bounded by
\begin{multline*}\norm{\e^{\im t \phi} T \e^{-\im t \phi} F}_m \\ =  \ \sup_{X} \e^{m|X|}  \left ( \sum_{\xi} \abs{ \sum_{|e|=1} \e^{\im t\phi(X+e,\xi-e)} F(X+e,\xi+e) - \e^{\im t\phi(X+e,\xi-e)} F(X+e,\xi-e)}^2 \right )^{\frac{1}{2}} \\
\le \ 2 \sup_{X} \e^{m|X|} \sum_{|e|=1} \left ( \sum_{\xi} \abs{F(X+e,\xi)}^2 \right )^{\frac{1}{2}} \\ \le \ 2 \left [\sup_{X} \sum_{|e|=1} \e^{m (|X|-|X+e|)} \right ] \norm{F}_m \
\le \ 4 d \e^{m}  \norm{F}_m \ . \end{multline*}
Thus
$$\norm{\mc{G}_t F}_m \ \le \ 4 d \e^{m} u +  g $$
and the result follows.
\end{proof}

\section{A limiting principle for resolvents}

\begin{lem}\label{lem:limres} Let $\mc{H}$ be a Hilbert space. Let $A$ be a normal operator on $\mc{H}$ and $B$ a bounded operator on $\mc{H}$, with $\Re A \ge 0$ and $\Re B \ge c >0$.  
\begin{enumerate}
\item If $\ker A = \{0\}$, then 
$$\lim_{\lambda \rightarrow \infty} \ipc{\phi}{\left ( \lambda A + B \right )^{-1} \psi}_{\mc{H}} \ = \ 0 $$
for any $\phi,\psi \in \mc{H}$.
\item If $\ker A \neq \{ 0 \}$, then  
$$\lim_{\lambda \rightarrow \infty} \ipc{\phi}{\left ( \lambda A + B \right )^{-1} \psi}_{\mc{H}} \ = \ \ipc{\Pi \phi}{ \left ( \Pi B \Pi \right )^{-1} \Pi \psi}_{\ran \Pi}$$
for any $\phi,\psi\in \mc{H}$, where $\Pi=$ projection onto the kernel of $A$.
\end{enumerate}
\end{lem}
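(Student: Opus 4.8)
The plan is to work throughout with the single vector $u_\lambda := (\lambda A + B)^{-1}\psi$, to show that it converges weakly to the claimed limit, and to pair against $\phi$ only at the very end. First I would record an a priori bound. Since $\Re A \ge 0$ and $\lambda \ge 0$, for every $f$ in the domain of $A$ we have $\Re\ipc{f}{(\lambda A + B)f} \ge \Re\ipc{f}{Bf} \ge c\norm{f}^2$; because $A$ is normal with spectrum in the closed right half-plane it is $m$-accretive, so $\lambda A + B$ is boundedly invertible and this estimate gives $\norm{(\lambda A + B)^{-1}} \le 1/c$ uniformly in $\lambda$. In particular $\norm{u_\lambda} \le \norm{\psi}/c$, and rewriting the defining identity as $\lambda A u_\lambda = \psi - B u_\lambda$ yields $\norm{A u_\lambda} \le (\norm{\psi} + \norm{B}\,\norm{\psi}/c)/\lambda \to 0$, so $A u_\lambda \to 0$ strongly.

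For item (1), let $E$ be the spectral measure of $A$ and set $P_\epsilon := E(\{|z| \ge \epsilon\})$. Since $P_\epsilon$ commutes with $A$ and $\norm{Ag} \ge \epsilon\norm{g}$ on $\ran P_\epsilon$, we get $\norm{P_\epsilon u_\lambda} \le \epsilon^{-1}\norm{A u_\lambda} \to 0$ as $\lambda \to \infty$ for each fixed $\epsilon$. Splitting $\ipc{\phi}{u_\lambda} = \ipc{\phi}{P_\epsilon u_\lambda} + \ipc{(1-P_\epsilon)\phi}{u_\lambda}$ and using $\norm{u_\lambda} \le \norm{\psi}/c$, I would bound $\limsup_{\lambda}|\ipc{\phi}{u_\lambda}|$ by $\norm{(1-P_\epsilon)\phi}\,\norm{\psi}/c$. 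Letting $\epsilon \downarrow 0$, continuity of $E$ from above gives $(1-P_\epsilon)\phi = E(\{|z|<\epsilon\})\phi \to E(\{0\})\phi = 0$ since $\ker A = \{0\}$, which proves $\ipc{\phi}{u_\lambda} \to 0$.

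For item (2), I would use that $\Pi = E(\{0\})$ commutes with the normal operator $A$, so $\mc{H} = \ran\Pi \oplus \ran(1-\Pi)$ reduces $A$, with $A=0$ on $\ran\Pi$ and $A_1 := (1-\Pi)A(1-\Pi)$ normal with trivial kernel on $\ran(1-\Pi)$. Writing $u_\lambda = p_\lambda + q_\lambda$ with $p_\lambda = \Pi u_\lambda$, $q_\lambda = (1-\Pi)u_\lambda$ and projecting $(\lambda A + B)u_\lambda = \psi$ onto the two summands gives the coupled system $\Pi B\Pi\, p_\lambda + \Pi B(1-\Pi) q_\lambda = \Pi\psi$ and $(\lambda A_1 + B_{11})q_\lambda = (1-\Pi)\psi - (1-\Pi)B\Pi\, p_\lambda$, where $B_{11} = (1-\Pi)B(1-\Pi)$. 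From the second relation $\lambda A_1 q_\lambda$ is norm-bounded, so $A_1 q_\lambda \to 0$ strongly; the spectral-splitting argument of item (1), now applied to $A_1$ on $\ran(1-\Pi)$ and the bounded sequence $q_\lambda$, forces $q_\lambda \rightharpoonup 0$ weakly. Since $\Pi B\Pi$ is boundedly invertible on $\ran\Pi$ (its numerical range lies in $\{\Re z \ge c\}$), solving the first relation gives $p_\lambda = (\Pi B\Pi)^{-1}\Pi\psi - (\Pi B\Pi)^{-1}\Pi B(1-\Pi)q_\lambda$, whose second term tends weakly to $0$ because a bounded operator preserves weak convergence. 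Hence $u_\lambda \rightharpoonup (\Pi B\Pi)^{-1}\Pi\psi$, and pairing against $\phi$ (the limit lying in $\ran\Pi$) gives exactly $\ipc{\Pi\phi}{(\Pi B\Pi)^{-1}\Pi\psi}$.

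The main obstacle is precisely what forces the two-limit structure: because $A$ may be unbounded and $B$ need not commute with $A$, one cannot diagonalize $\lambda A + B$ or pass to the limit inside the resolvent directly. The delicate point is that $(\lambda A_1 + B_{11})^{-1} \to 0$ only weakly and never in norm, so I deliberately avoid applying it to a $\lambda$-dependent vector and instead argue at the level of the vectors $q_\lambda$ through the strong convergence $A_1 q_\lambda \to 0$ combined with $\ker A_1 = \{0\}$. Getting the order of limits right — first $\lambda \to \infty$ at fixed spectral cutoff $\epsilon$, then $\epsilon \downarrow 0$ using continuity of the spectral measure at $\{0\}$ — is where the hypothesis on $\ker A$ enters and is the crux of both parts.
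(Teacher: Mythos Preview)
Your proposal is correct and follows essentially the same approach as the paper's proof: both define $u_\lambda=(\lambda A+B)^{-1}\psi$, obtain the uniform bound $\norm{u_\lambda}\le c^{-1}\norm{\psi}$ from $\Re(\lambda A+B)\ge c$, deduce $A u_\lambda\to 0$ from $\lambda A u_\lambda=\psi-Bu_\lambda$, infer $(I-\Pi)u_\lambda\rightharpoonup 0$, and then use $\Pi B u_\lambda=\Pi\psi$ together with invertibility of $\Pi B\Pi$ to identify the weak limit of $\Pi u_\lambda$. The only difference is that the paper simply asserts ``Thus $(I-\Pi)h_\lambda$ converges weakly to zero,'' whereas you supply an explicit spectral-cutoff argument (splitting via $P_\epsilon=E(\{|z|\ge\epsilon\})$ and then letting $\epsilon\downarrow 0$); this is a perfectly good way to fill that gap.
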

\begin{proof}
Let $\psi \in \mc{H}$ be given and let $h_\lambda = (\lambda A + B)^{-1}\psi$.   Note that  $\|h_\lambda\| \ \le \ c^{-1} \norm{\psi}$, since
$$ c \norm{h_\lambda}^2 \ \le \ \Re \ipc{h_\lambda}{\psi} \ \le \ \norm{h_\lambda} \norm{\psi}.$$
This and the identity $\lambda  A h_\lambda = \psi - B h_\lambda $ imply 
$$|\lambda| \norm{A h_\lambda} \ \le \ \left ( 1 + c^{-1} \norm{B} \right ) \norm{\psi} \ .$$
Thus  $(I-\Pi) h_\lambda$ converges weakly to zero.  If $\ker A = \{0\}$, so $\Pi=0$, then this completes the proof.   On the other hand, if $\Pi \neq 0$, then it commutes with $A$, since $A$ is normal. Thus
$$\Pi B h_\lambda \ = \ \Pi \psi \ .$$
Since $(I-\Pi) h_\lambda$ converges weakly to  $0$ and $\Pi B \Pi$ is boundedly invertible on $\ran \Pi$, it follows that $\Pi h_\lambda$ converges weakly to  $(\Pi B \Pi)^{-1} \Pi \psi$.
\end{proof}


\begin{thebibliography}{9}
\providecommand{\natexlab}[1]{#1}
\providecommand{\url}[1]{\texttt{#1}}
\expandafter\ifx\csname urlstyle\endcsname\relax
  \providecommand{\doi}[1]{doi: #1}\else
  \providecommand{\doi}{doi: \begingroup \urlstyle{rm}\Url}\fi
\bibitem[Aizenman and Molchanov(1993)]{Aizenman1993}
M.~Aizenman and S.~Molchanov.
\newblock {Localization at Large Disorder and at Extreme Energies - an
  Elementary Derivation}.
\newblock \emph{Commun. Math. Phys.}, 157\penalty0 (2):\penalty0 245--278,
  1993.

\bibitem[Androulakis et~al.(2012)Androulakis, Bellissard, and
  Sadel]{Androulakis}
G.~Androulakis, J.~Bellissard, and C.~Sadel.
\newblock {Dissipative Dynamics in Semiconductors at Low Temperature}.
\newblock \emph{J. Stat. Phys.}, 147\penalty0 (2):\penalty0 448--486, 2012.

\bibitem[{De Roeck} and Fr{\"{o}}hlich(2011)]{DeRoeck2011}
W.~{De Roeck} and J.~Fr{\"{o}}hlich.
\newblock {Diffusion of a Massive Quantum Particle Coupled to a Quasi-Free
  Thermal Medium}.
\newblock \emph{Commun. Math. Phys.}, 303\penalty0 (3):\penalty0 613--707, 
  2011.

\bibitem[Feller(1971)]{FellerII}
W.~Feller.
\newblock \emph{{An introduction to probability theory and its applications.
  Vol. II.}}
\newblock Second edition. John Wiley {\&} Sons, Inc., New York-London-Sydney,
  1971.

\bibitem[Fr{\"{o}}hlich and Spencer(1983)]{Frohlich1983}
J.~Fr{\"{o}}hlich and T.~Spencer.
\newblock {Absence of diffusion in the Anderson tight binding model for large
  disorder or low energy}.
\newblock \emph{Commun. Math. Phys.}, 88\penalty0 (2):\penalty0 151--184, 1983.

\bibitem[Mott(1968)]{Mott1968}
N.~Mott.
\newblock {Conduction in glasses containing transition metal ions}.
\newblock \emph{J. Non. Cryst. Solids}, 1\penalty0 (1):\penalty0 1--17, 
  1968.

\bibitem[Schenker(2014)]{Schenker2014a}
J.~Schenker.
\newblock {How large is large? Estimating the critical disorder for the
  Anderson model}.
\newblock \emph{Lett. Math. Phys.}, 105\penalty0 (1):\penalty0 1--9,  2014.

\bibitem[Schenker(2015)]{Schenker2014}
J.~Schenker.
\newblock {Diffusion in the Mean for an Ergodic Schr{\"{o}}dinger Equation
  Perturbed by a Fluctuating Potential}.
\newblock \emph{Commun. Math. Phys.}, 339\penalty0 (3):\penalty0 859--901, 
  2015.

\bibitem[Spehner and Bellissard()]{Spehner}
D.~Spehner and J.~Bellissard.
\newblock {A Kinetic Model of Quantum Jumps}.
\newblock \emph{J. Stat. Phys.}, 104\penalty0 (3-4):\penalty0 525--572.

\end{thebibliography}

\end{document}